\newcommand{\oomit}[1]{}
\newcommand{\argmax}{\mathop{\mathrm{argmax}}}
\newtheorem{assumption}{Assumption}
\newtheorem{definition}{Definition}
\newtheorem{proposition}{Proposition}
\newtheorem{remark}{Remark}
\newtheorem{theorem}{Theorem}
\newtheorem{example}{Example}
\newtheorem{problem}{Problem}
\begin{document}
\begin{frontmatter}

\title{Convex Computations for Controlled Safety Invariant Sets of Black-box Discrete-time Dynamical Systems} 

\author{Taoran Wu$^{1}$, Yiling Xue$^{1,2}$, Jingduo Pan$^{1}$, Dejin Ren$^{1}$} 
\author{Arvind Easwaran$^{3}$, and Bai Xue$^{1,2}$}

\address{1. KLSS and SKLCS, ISCAS, Beijing, China; University of Chinese Academy of Sciences, Beijing, China}
\address{2. School of Advanced Interdisciplinary Sciences, University of Chinese Academy of Sciences, Beijing, China}
\address{3. College of Computing and Data Science, Nanyang Technological University, Singapore}
\address{Email: \{wutr,xuebai\}@ios.ac.cn; arvinde@ios.ac.cn.}

\begin{abstract}
Identifying controlled safety invariant sets (CSISs) is essential for safety-critical systems. This paper addresses the problem of computing CSISs for black-box discrete-time systems, where the dynamics are unknown and only limited simulation data are available. Traditionally, a CSIS requires that for every state in the set, there exists a control input that keeps the system within the set at the next step. However, enforcing such universal invariance, i.e., requiring the set to remain controlled invariant for all states, is often overly restrictive or impractical for black-box systems. To address this, we introduce the notion of a Probably Approximately Correct (PAC) CSIS, in which, with prescribed confidence, there exists a suitable control input to keep the system within the set at the next step for at least a specified fraction of the states. Our approach leverages barrier functions and scenario optimization, yielding a tractable linear programming method for estimating PAC CSISs. Several illustrative examples demonstrate the effectiveness of the proposed framework.
\end{abstract}

\begin{keyword}
Black-box Systems, PAC Controlled Safety Invariant Sets, Scenario Optimization.
\end{keyword}

\end{frontmatter}

\section{Introduction}
\label{sec:intro}
Identifying a controlled safety invariant set (CSIS) is crucial in safety-critical applications, as it defines a region where the system can operate safely despite complex dynamics \citep{blanchini1999set,ames2019control}. Without control inputs, a safety invariant set (SIS) is a subset of the safe set containing states that remain within it at the next step. When control inputs are available, the SIS generalizes to a CSIS, comprising states that can be kept within the set using admissible controls. Knowing these sets is important for safety and controller design—for instance, a CSIS is essential for safe model predictive control \citep{kerrigan2000invariant}.

 The computation of CSISs has been extensively studied, with many model-based methods proposed \citep{sassi2012computation,schmid2023probabilistic,bravo2005computation,le2017interval}. However, these methods require knowledge of the system model, which is often unavailable for physical or black-box systems. Approximating the dynamics is typically difficult, costly, and can produce models too complex for practical CSIS computation. Moreover, the standard one-step invariance condition over all states in CSIS generally implies always-invariance, which is overly strong for black-box systems, where reliable predictions are limited to one or a few steps \citep{janner2019trust}. To address this, we decouple one-step invariance from always-invariance by relaxing the requirement of universal one-step invariance over all states in the CSIS to only a fraction of states, and adapt it to the black-box setting. Using only a finite dataset, we learn CSISs within the PAC framework \citep{valiant1984theory}, ensuring that, with specified confidence, a certain fraction of initial states in the learned CSIS can keep the system within the set at the next step.

In this paper, we propose a data-driven approach to estimate CSISs in the PAC sense for black-box discrete-time dynamical systems, without resorting to system modeling. We assume that dynamical models of these systems are unknown, and we have access to a numeric simulator or system behaviors sampled from the actual system. The method relies on barrier certificates and a family of data to construct linear optimization problems for computing CSISs, and utilizes the scenario optimization theory, originally designed for solving uncertain convex programs statistically in \citep{calafiore2006scenario}, to formally characterize the one-step invariance of the computed CSIS in the PAC sense. Finally, we test the proposed method on computing CSISs using several examples. The main contributions of this work are summarized as follows: \begin{enumerate} 
\item This work addresses the challenge of identifying CSISs for black-box discrete-time systems. 
\item A linear programming-based method is proposed to compute CSISs from a family of data. Consequently, we reduce the highly complex nonlinear problem of computing CSISs for black-box systems to a computationally tractable linear programming problem. 
\item A prototype tool \textbf{PCSIS} is developed to implement the proposed approach and is available at https://github.com/pcsis/PCSIS. The effectiveness of our approach is demonstrated through several numerical examples.
\end{enumerate}

\subsection*{Related Work}
Computing CSISs has been extensively studied, e.g., \citep{blanchini1999set,korda2013convex,bravo2005computation,xue2021robust}. However, these methods rely on having an explicit analytical model of the system dynamics, making them unsuitable for black-box systems where no prior model is available—the focus of this work.

To overcome the limits of model-based methods, recent work uses data-driven approaches. For linear systems, \citep{chen2018data} approximates robust invariant sets, and \citep{chakrabarty2018data} applies active learning to nonlinear systems, but neither provides formal guarantees. \citep{wang2020scenario} identifies a PAC set implicitly via fixed-point iteration and fits a classifier for uncontrolled systems, without guarantees. In contrast, our method computes explicit CSISs with PAC guarantees, while handling the added complexity of control inputs.

Scenario optimization, introduced in \citep{calafiore2006scenario}, addresses uncertain convex programs by sampling constraints, reducing them to simpler finite problems. It has been extended to estimate reachable sets \citep{devonport2020estimating,xue2020pac,dietrich2024nonconvex} and, when combined with barrier certificates, to safety verification and safe controller design \citep{salamati2022data,salamati2022safety,nejati2023formal,salamati2024data}. Existing barrier-based methods typically require Lipschitz constants, which are difficult to compute. In contrast, our CSIS computation avoids this requirement, relies mainly on linear programming, and focuses on maximizing CSIS size rather than just verification.

A related work, \citep{korda2020computing}, provides PAC guarantees for approximating maximal invariant sets, but its sample complexity depends on system dimension and hard-to-estimate quantities like Lipschitz constants. In contrast, we focus on PAC guarantees for one-step invariance, which leads to dimension-independent sample complexity and avoids relying on such intractable parameters.



\section{Preliminaries}
\label{sec:pre}

This section defines the problem of computing PAC CSISs for black-box discrete-time systems and recalls scenario optimization for uncertain convex problems.

\noindent\textbf{Notations.} $\mathbb{R}$, $\mathbb{R}_{\geq 0}$, and $\mathbb{R}^n$ denote real numbers, non-negative real numbers, and $n$-dimensional vectors, respectively; $\mathbb{N}$ denotes non-negative integers. For sets $\Delta_1$ and $\Delta_2$, $\Delta_1 \setminus \Delta_2$ is the set difference.

\subsection{Problem Statement}

We consider a black-box discrete-time system
\begin{equation}
\label{system_c}
\bm{x}(t+1) = \bm{f}(\bm{x}(t), \bm{u}(t)), \quad t \in \mathbb{N},
\end{equation}
with state $\bm{x}(\cdot) \in \mathbb{R}^n$, control $\bm{u}(\cdot) \in \mathbb{U} \subseteq \mathbb{R}^s$ (compact), and unknown continuous dynamics $\bm{f}$ over $\bm{x}$ and $\bm{u}$. A control policy $\pi$ is a function $\bm{u}(\cdot): \mathbb{N}\rightarrow \mathbb{U}$.

\begin{assumption}
\label{black_c_assumption}
We can extract a set of independent and identically distributed (i.i.d.) state samples \[\mathbb{X}=\{\bm{x}_i\}_{i=1}^N\sim \textnormal{P}^N,\] where $\textnormal{P}^N$ denotes the $N$-fold product distribution of $\textnormal{P}$, and each sample $\bm{x}_i$ is drawn independently from the uniform probability space $(\mathcal{X},\mathcal{F}_{\bm{x}},\textnormal{P})$. For any $\bm{x} \in \mathcal{X}$ and $\bm{u} \in \mathbb{U}$, the next state $\bm{y} = \bm{f}(\bm{x}, \bm{u})$ can be observed.
\end{assumption}

A CSIS is a set of states in $\mathcal{X}$ from which a control policy can keep the system within $\mathcal{X}$ at the next step.

\begin{definition}[One-Step CSIS]
\label{c_safe}
A set $\mathbb{S} \subseteq \mathcal{X}$ is a CSIS if, for each $\bm{x}_0 \in \mathbb{S}$, there exists $\bm{u} \in \mathbb{U}$ such that $\bm{f}(\bm{x}_0, \bm{u}) \in \mathbb{S}$, i.e., $\forall \bm{x}_0\in \mathbb{S}. \exists \bm{u}\in \mathbb{U}. \bm{f}(\bm{x}_0, \bm{u}) \in \mathbb{S}$.
\end{definition}

As noted in the introduction, enforcing one-step invariance for every state in $\mathbb{S}$ can be overly restrictive or impractical, particularly for black-box systems. This motivates adopting a probabilistic notion of one-step CSISs.

\begin{definition}[Probabilistic One-Step CSIS]
\label{pro_csis}
Given probability threshold $\alpha \in (0,1)$, a set $\tilde{\mathbb{S}} \subseteq \mathcal{X}$ with $\textnormal{P}[\bm{x} \in \tilde{\mathbb{S}}]\neq 0$ is a probabilistic one-step CSIS with respect to $\alpha$ if
\begin{equation}
\label{pone}
\textnormal{P}[\exists \bm{u} \in \mathbb{U}, \bm{f}(\bm{x},\bm{u}) \in \tilde{\mathbb{S}} \mid \bm{x} \in \tilde{\mathbb{S}}] 
\ge 1 - \frac{\alpha}{\textnormal{P}[\bm{x} \in \tilde{\mathbb{S}}]}.
\end{equation}
\end{definition}

\eqref{pone} means that, among all states in $\tilde{\mathbb{S}}$, the fraction of states 
for which there exists a control input that keeps the system \eqref{system_c} 
inside $\tilde{\mathbb{S}}$ at the next time step is at least
$1 - \frac{\alpha}{\textnormal{P}[\bm{x} \in \tilde{\mathbb{S}}]}$.

Since only sampled data are available for the system \eqref{system_c}, exact computation of $\tilde{\mathbb{S}}$ is generally impossible under Assumption \ref{black_c_assumption}. Thus, we aim to compute a PAC CSIS.

\begin{problem}[PAC CSIS]
\label{pac_csis}
Given $\mathcal{X}$, probability threshold $\alpha$, and confidence $\beta$, design a procedure $\Xi(\mathbb{X})$ that returns $\tilde{\mathbb{S}}$ with $\textnormal{P}[\bm{x} \in \tilde{\mathbb{S}}]\neq 0$ such that, with confidence at least $1-\beta$ over the sample set $\mathbb{X}$, 
\[
\textnormal{P}[\exists \bm{u} \in \mathbb{U}, \bm{f}(\bm{x},\bm{u}) \in \tilde{\mathbb{S}} \mid \bm{x} \in \tilde{\mathbb{S}}] 
\ge 1 - \frac{\alpha}{\textnormal{P}[\bm{x} \in \tilde{\mathbb{S}}]}.
\]
\end{problem}

If both $1-\beta$ and $1-\frac{\alpha}{\textnormal{P}[\bm{x} \in \tilde{\mathbb{S}}]}$ are high, the system starting in $\tilde{\mathbb{S}}$ remains in $\tilde{\mathbb{S}}$ at the next step with high likelihood. Since $\textnormal{P}[\bm{x}\in \tilde{\mathbb{S}}]$ is positively correlated with $1 - \frac{\alpha}{\textnormal{P}[\bm{x}\in \tilde{\mathbb{S}}]}$, maximizing $\textnormal{P}[\bm{x} \in \tilde{\mathbb{S}}]$ helps enlarge the PAC CSIS while maintaining safety guarantees.

In our framework, $\mathrm{P}$ is a user-specified reference measure on the state space $\mathcal{X}$, used to quantify set size and define the measure under which sampling and safety guarantees are evaluated. In particular, $\mathrm{P}[\bm{x} \in \tilde{\mathbb{S}}]$ denotes the relative volume of $\tilde{\mathbb{S}}$, and the conditional term in~\eqref{pone} captures the fraction of states in $\tilde{\mathbb{S}}$ that admit a control input keeping the system within $\tilde{\mathbb{S}}$ at the next step. Importantly, $\mathrm{P}$ is not the true state distribution, but a fixed reference measure. Consequently, the one-step PAC guarantee-quantified by the reference volume $1 - \frac{\alpha}{\mathrm{P}[\bm{x}\in \tilde{\mathbb{S}}]}$ with confidence $1-\beta$-can be applied recursively whenever the system lies in $\tilde{\mathbb{S}}$, enabling safety monitoring over extended horizons.

\begin{remark}
Exact computation of $\textnormal{P}[\bm{x} \in \tilde{\mathbb{S}}]$ may be difficult for complex sets. Practical approaches include computing a lower bound or using Monte Carlo estimation.
\end{remark}

We observe a positive correlation between $\textnormal{P}[\bm{x}\in \tilde{\mathbb{S}}]$ and $1 - \frac{\alpha}{\textnormal{P}[\bm{x}\in \tilde{\mathbb{S}}]}$. Therefore, maximizing $\textnormal{P}[\bm{x} \in \tilde{\mathbb{S}}]$ helps enlarge the PAC CSIS while maintaining safety guarantees.

\subsection{Scenario Optimization}
\label{sub:so}

This section briefly summarizes scenario optimization, mainly following \citep{calafiore2006scenario}. Scenario optimization seeks robust solutions to uncertain convex problems of the form:
\begin{equation}
\label{ROP}
    \begin{split}
        \bm{z}^* &= \arg\min_{\bm{z}\in \mathbb{D}\subset \mathbb{R}^m} \bm{c}^{\top} \bm{z}, \\
        \text{s.t. } & \bm{g}(\bm{z},\bm{\delta})\leq 0, \forall \bm{\delta}\in \Delta,
    \end{split}
\end{equation}
where \(\bm{\delta}\in \Delta\) is an uncertain parameter with probability measure \(\textnormal{P}_\delta\). Finding a solution \(\bm{z}^*\) that satisfies all \(\bm{\delta} \in \Delta\) is generally infeasible.  

\begin{assumption}
\label{convex_ass}
The set \(\mathbb{D}\subseteq \mathbb{R}^m\) is convex and closed, \(\Delta \subseteq \mathbb{R}^{n_\delta}\), and \(\bm{g}(\bm{z},\bm{\delta})\) is continuous and convex in \(\bm{z}\) for any fixed \(\bm{\delta}\in \Delta\).
\end{assumption}
Scenario optimization approximates \eqref{ROP} using \(N\) samples of \(\bm{\delta}\), denoted $\Delta_N=\{\bm{\delta}_i\}_{i=1}^N\stackrel{\text{i.i.d.}}{\sim}\textnormal{P}_{\delta}$, called scenarios, leading to the sampled program:
\begin{equation}
\label{ROP-N}
    \begin{split}
        \bm{z}^*(\Delta_N) &= \arg\min_{\bm{z}\in \mathbb{D}\subset \mathbb{R}^m} \bm{c}^{\top} \bm{z}, \\
        \text{s.t. } & \bm{g}(\bm{z},\bm{\delta}_i)\leq 0, \quad i=1,\ldots,N.
    \end{split}
\end{equation}

\begin{assumption}
\label{feasible_uni}
For any sample set \(\Delta_N\), \eqref{ROP-N} has a unique optimal solution \(\bm{z}^*(\Delta_N)\) (tie-breaks applied if needed).
\end{assumption}
Once \(\bm{z}^*(\Delta_N)\) is obtained, we can define the set of constraints for which it is not satisfied: $\mathbb{F}(\bm{z}^*(\Delta_N)) = \{\bm{\delta} \in \Delta \mid \bm{g}(\bm{z}^*(\Delta_N), \bm{\delta}) \not\le 0\}$.
\begin{definition}
\label{violation}
The violation probability of \(\bm{z}^*(\Delta_N)\) is:  $
V(\bm{z}^*(\Delta_N)) = \textnormal{P}_\delta[\bm{\delta}\in \mathbb{F}(\bm{z}^*(\Delta_N))]$,
i.e., the probability that \(\bm{z}^*(\Delta_N)\) fails for a randomly drawn \(\bm{\delta}\).
\end{definition}

The main PAC guarantee is given by \citep{campi2009scenario}:

\begin{proposition}[\citep{campi2009scenario}]
\label{formal_PAC}
With \(N\) samples, 
\[
\textnormal{P}_{\Delta_N} [V(\bm{z}^*(\Delta_N)) \le \alpha] \ge 1-\beta,
\]
where $\textnormal{P}_{\Delta_N}:=\textnormal{P}_{\delta}^N$, provided \(\alpha \ge \frac{2}{N} (\ln\frac{1}{\beta} + m)\), and \(\textnormal{P}_\delta^N\) is the $N$-fold product distribution of $\textnormal{P}_{\delta}$.
\end{proposition}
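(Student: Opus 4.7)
The plan is to follow the classical Campi--Garatti scheme in three stages: first establish a geometric bound on how many scenarios can actively shape the optimum, then convert this bound into a binomial-tail estimate on the violation probability, and finally simplify the tail to obtain the explicit sample condition $\alpha \ge \frac{2}{N}(\ln\frac{1}{\beta}+m)$.

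\textbf{Step 1 (support constraints).} Call $\bm{\delta}_i \in \Delta_N$ a \emph{support constraint} if removing it from \eqref{ROP-N} strictly improves the optimal cost. Using Assumption~\ref{convex_ass} (convexity of $\bm{g}$ in $\bm{z}$ and of $\mathbb{D}$) and Assumption~\ref{feasible_uni} (uniqueness of $\bm{z}^*(\Delta_N)$), I would invoke a Helly-type argument on supporting hyperplanes to the feasibility set at $\bm{z}^*(\Delta_N)$ in the $m$-dimensional decision space to show that the number of support constraints is at most $m$. This is the geometric/convex-analytic core of the argument and I expect it to be the main obstacle: one must carefully rule out degenerate configurations (handled by the tie-break rule in Assumption~\ref{feasible_uni}) and argue that any active configuration of more than $m$ constraints must contain a redundant one.

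\textbf{Step 2 (combinatorial decomposition).} For each $I \subseteq \{1,\ldots,N\}$ with $|I|=k \le m$, let $A_I$ denote the event that the support-constraint set equals exactly $I$. By exchangeability of the i.i.d.\ samples, $\textnormal{P}_{\Delta_N}[A_I]$ depends only on $k$. Conditionally on $A_I$, the optimizer $\bm{z}^*(\Delta_N)$ depends solely on $\{\bm{\delta}_i\}_{i\in I}$, so the remaining $N-k$ samples must independently satisfy $\bm{g}(\bm{z}^*(\Delta_N),\bm{\delta})\le 0$, each with probability $1 - V(\bm{z}^*(\Delta_N))$. Partitioning on $|I|$ and combining with the support-constraint cap from Step~1 yields the distribution-free bound
\begin{equation*}
\textnormal{P}_{\Delta_N}\!\bigl[V(\bm{z}^*(\Delta_N)) > \alpha\bigr] \;\le\; \sum_{k=0}^{m-1}\binom{N}{k}\alpha^{k}(1-\alpha)^{N-k},
\end{equation*}
independent of the distribution $\textnormal{P}_\delta$ and of the specific form of $\bm{g}$.

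\textbf{Step 3 (tail simplification).} The remaining task is to force the right-hand side below $\beta$. Using $\binom{N}{k}\le N^{k}/k!$ and $(1-\alpha)^{N-k}\le e^{-\alpha(N-k)}$, the tail is dominated by $e^{-\alpha N}\sum_{k=0}^{m-1}(N\alpha e^{\alpha})^{k}/k!$, which after standard manipulation yields the sufficient condition $\alpha \ge \frac{2}{N}(\ln\frac{1}{\beta}+m)$ stated in the proposition. This last step is routine estimation; all the conceptual weight of the proof sits in Step~1, with Step~2 being a careful but standard exchangeability argument.
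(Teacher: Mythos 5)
The paper does not prove this proposition at all: it is imported verbatim from the cited scenario-optimization literature (\cite{campi2009scenario}), so there is no in-paper argument to compare against. Your three-stage outline is the correct classical Campi--Garatti route, and Steps 1 and 2 name the right ingredients: the Helly-type bound of at most $m$ support constraints for a convex program with decision variable in $\mathbb{R}^m$ (using Assumptions~\ref{convex_ass} and \ref{feasible_uni}), and the exchangeability/conditioning argument that yields the distribution-free binomial tail $\sum_{k=0}^{m-1}\binom{N}{k}\alpha^k(1-\alpha)^{N-k}$. Be aware, though, that Step 2 as you phrase it (``conditionally on $A_I$ the remaining $N-k$ samples must independently satisfy the constraint'') is the heuristic version; the rigorous proof conditions on the candidate optimizers $\bm{z}^*_I$ built from each $m$-subset and integrates over the law of their violation probabilities, which is where most of the technical work in the original paper lives.

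The one concrete gap is in Step 3. The chain $\binom{N}{k}\le N^k/k!$ and $(1-\alpha)^{N-k}\le e^{-\alpha(N-k)}$ gives the majorant $e^{-\alpha N}\sum_{k=0}^{m-1}(N\alpha e^{\alpha})^k/k!$, but this is too lossy to deliver the stated constant: at the critical point $N\alpha = 2(\ln\frac{1}{\beta}+m)$ the partial exponential sum is of order $(2e\,N\alpha/m)^{m}$, and the resulting inequality fails for moderate $m$ when $\ln\frac{1}{\beta}$ is comparable to $m$. The standard way to close Step 3 is the Chernoff/Bernstein lower-tail bound for a binomial $X\sim\mathrm{Bin}(N,\alpha)$, namely $\textnormal{P}[X\le k]\le \exp\bigl(-\tfrac{(N\alpha-k)^2}{2N\alpha}\bigr)$ for $k\le N\alpha$: with $k=m$ and $N\alpha\ge 2(\ln\frac{1}{\beta}+m)=:2L$ one gets exponent at least $\tfrac{(2L-m)^2}{4L}=(L-m)+\tfrac{m^2}{4L}\ge \ln\frac{1}{\beta}$, which is exactly the claimed condition. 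So the proposition is true and your architecture is right, but the ``routine estimation'' you defer to in Step 3 needs to be replaced by this sharper tail inequality.
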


\section{Estimating PAC CSISs}
\label{sec:methods}
In this section, we propose a linear programming method to compute a large PAC CSIS, based on scenario optimization and control barrier functions.
To handle the difficulty of control inputs, we first compute a PAC one-step safety invariant set (SIS) for an uncontrolled black-box system ($\bm{x}(t+1)=\bm{f}(\bm{x}(t))$), then extend the method to controlled systems. This helps illustrate how the key ideas interact.

\subsection{Computing PAC SISs}
\label{sec:pacsis}
In this subsection, we present our method to compute a PAC SIS for an uncontrolled black-box system. First, we recall discrete-time barrier functions for which the zero superlevel set defines a SIS.

\begin{definition}
   Given $\gamma\in (0,1)$, a function $h(\cdot): \mathbb{R}^n \rightarrow \mathbb{R}$ is a discrete-time barrier function if the following inequalities hold  with $\lambda=0$,
\begin{equation}
\label{barrier0}
    \begin{cases}
        h(\bm{x})<0, & \forall \bm{x}\in \mathbb{R}^n \setminus \mathcal{X},\\
        h(\bm{f}(\bm{x}))\geq \gamma h(\bm{x})-\lambda, & \forall \bm{x}\in \mathcal{X},\\
        \lambda\in [0,\overline{\lambda}],
    \end{cases}
\end{equation}
where $\overline{\lambda}$ is a specified upper bound for $\lambda$. 
\end{definition}
\begin{proposition} [\citep{agrawal2017discrete}]
    If $h(\cdot): \mathbb{R}^n \rightarrow \mathbb{R}$ is a discrete-time barrier function, the set $\widetilde{\mathbb{I}}=\{\bm{x}\in \mathcal{X}\mid h(\bm{x})\geq 0\}$ is a SIS, i.e., $\forall \bm{x}\in \widetilde{\mathbb{I}}, \bm{f}(\bm{x})\in \widetilde{\mathbb{I}}$.  
\end{proposition}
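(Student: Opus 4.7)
The plan is to verify the two defining properties of a safety invariant set directly from the barrier inequalities, taking an arbitrary element $\bm{x} \in \widetilde{\mathbb{I}}$ and showing that $\bm{f}(\bm{x}) \in \widetilde{\mathbb{I}}$ as well. Since $\widetilde{\mathbb{I}} = \{\bm{x} \in \mathcal{X} \mid h(\bm{x}) \geq 0\}$, this amounts to establishing both $\bm{f}(\bm{x}) \in \mathcal{X}$ and $h(\bm{f}(\bm{x})) \geq 0$.

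First I would use the second inequality of \eqref{barrier0}, which, with $\lambda = 0$, yields $h(\bm{f}(\bm{x})) \geq \gamma h(\bm{x})$ for every $\bm{x} \in \mathcal{X}$. Since $\bm{x}\in \widetilde{\mathbb{I}}\subseteq \mathcal{X}$ satisfies $h(\bm{x})\geq 0$, and $\gamma \in (0,1)$ is strictly positive, this immediately gives $h(\bm{f}(\bm{x})) \geq 0$.

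Next I would invoke the first inequality of \eqref{barrier0}, which states $h(\bm{y}) < 0$ for all $\bm{y} \in \mathbb{R}^n \setminus \mathcal{X}$. Its contrapositive tells us that any point with a non-negative value of $h$ must lie in $\mathcal{X}$. Applying this to $\bm{y} = \bm{f}(\bm{x})$ and combining with the previous step gives $\bm{f}(\bm{x}) \in \mathcal{X}$, hence $\bm{f}(\bm{x}) \in \widetilde{\mathbb{I}}$. Since $\bm{x}$ was arbitrary, $\widetilde{\mathbb{I}}$ is a safety invariant set.

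There is no real obstacle here; the argument is a two-line chain of implications. The only subtlety worth flagging is the role of $\gamma > 0$ (as opposed to $\gamma = 0$, which would still work, or $\gamma < 0$, which would not), and the necessity of the strict negativity of $h$ outside $\mathcal{X}$ to rule out the pathological case where $\bm{f}(\bm{x})$ leaves $\mathcal{X}$ while still having $h(\bm{f}(\bm{x})) \geq 0$. Both are built into the definition, so the proof reduces to a direct substitution.
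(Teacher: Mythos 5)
Your argument is correct and is the standard one: for $\bm{x}\in\widetilde{\mathbb{I}}$ the decrease condition with $\lambda=0$ and $\gamma>0$ gives $h(\bm{f}(\bm{x}))\geq \gamma h(\bm{x})\geq 0$, and the contrapositive of the negativity condition outside $\mathcal{X}$ then forces $\bm{f}(\bm{x})\in\mathcal{X}$, hence $\bm{f}(\bm{x})\in\widetilde{\mathbb{I}}$. The paper itself states this proposition without proof, citing \cite{agrawal2017discrete}, so there is nothing to diverge from; your two-step chain is exactly the argument that reference (and the paper's later use of the same reasoning inside the proof of Proposition \ref{pro3}) relies on.
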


Traditionally, the computation of a discrete-time barrier function satisfying \eqref{barrier0} needs a parameterized function $h(\bm{a},\cdot): \mathbb{R}^n\rightarrow \mathbb{R}$. In this paper, we impose certain assumptions on $h(\bm{a},\cdot): \mathbb{R}^n\rightarrow \mathbb{R}$. 
\begin{assumption}
\label{h}
 $h(\bm{a},\cdot): \mathbb{R}^n\rightarrow \mathbb{R}$ is linear over unknown parameters $\bm{a} \in \mathbb{R}^m$ and negative over $\bm{x}\in \mathbb{R}^n\setminus \mathcal{X}$. Also, it is continuous over $\bm{a}\in \mathbb{R}^{m}$ for any $\bm{x}\in \mathbb{R}^n$. 
\end{assumption}

One example of a function satisfying Assumption \ref{h} is:
\begin{equation}
\label{h_type}
\begin{cases}
h(\bm{a},\bm{x}) = 1_{\mathcal{X}}(\bm{x}) h_1(\bm{a},\bm{x}) + 1_{\mathbb{R}^n \setminus \mathcal{X}}(\bm{x}) C, \\
h(\bm{a},\bm{x}) \equiv a_1 \quad (\text{if } a_2 = \dots = a_m = 0), \forall \bm{x} \in \mathcal{X},
\end{cases}
\end{equation}
where $1_{\mathcal{X}}$ is the indicator function, $h_1(\bm{a},\bm{x})$ is linear in $\bm{a}$ and continuous in both $\bm{a}$ and $\bm{x}$, and $C$ is a negative constant. For $\bm{x} \notin \mathcal{X}$, $h(\bm{a},\bm{x}) = C$. Using this type of barrier function not only simplifies the computations, but also  ensures the system remains safe without entering the unsafe set to observe its exact state, which is important in safety-critical applications.

We search for a parameterized function $h(\bm{a},\cdot): \mathbb{R}^n\rightarrow \mathbb{R}$ satisfying \eqref{barrier0} with $\mathbb{X}=\{\bm{x}_i\}_{i=1}^N$ under Assumption \ref{black_c_assumption} via solving the following linear program:
\begin{equation}
\label{one_N}
  \begin{split}
  &\textstyle \min_{\bm{a}\in \mathbb{R}^m,\lambda\in \mathbb{R}} \lambda\\
    \text{s.t.}&
    \begin{cases}
        h(\bm{a},\bm{x}_0)\geq \epsilon_0,\\
        h(\bm{a},\bm{y}_{i})\geq \gamma h(\bm{a},\bm{x}_i)-\lambda, i=1,\ldots,N,\\
         \lambda \in [0,\overline{\lambda}];~ a_l \in [-U_{a_l}, U_{a_l}];~
         l=1,\ldots, m,\\
    \end{cases}
    \end{split}
\end{equation}
where $\bm{y}_i= \bm{f}(\bm{x}_i)$ for $i=1,\ldots,N$. The parameters $\epsilon_0>0$ and $\gamma\in(0,1)$ are user-defined, and $U_{a_l} \ge \epsilon_0$ bounds $|a_l|$. The value $\overline{\lambda}$ just needs to exceed $\gamma U_{a_1}-C$ (by setting $a_2=\ldots=a_m=0$ and $a_1\geq 0$), ensuring that the optimization \eqref{one_N} is feasible for any set of $N$ samples $\mathbb{X}$, as required by Assumption \ref{feasible_uni}. \textit{The requirement that $h(\bm{a},\bm{x}_0)\geq \epsilon_0$ is to guarantee, due to the continuity of the function $h(\bm{a},\bm{x})$ over $\mathcal{X}$, that the set $\widetilde{\mathbb{S}}=\{\bm{x}\in \mathcal{X}\mid h(\bm{a},\bm{x}) \geq 0\}$ has positive Lebesgue measure, i.e.,  $\textnormal{P}[\bm{x} \in \tilde{\mathbb{S}}]\neq 0$. Generally, $\bm{x}_0\in \mathcal{X}$ is recommended to be a state that is away from the boundary of the set $\mathcal{X}$.}

Let $(\bm{a}^{*}(\mathbb{X}),\lambda^*(\mathbb{X}))$ be an optimal solution to \eqref{one_N}. If $\lambda^*(\mathbb{X})=0$, we proceed with addressing the following optimization \eqref{barrier_convex1} that maximizes the size of the set $\widetilde{\mathbb{S}}=\{\bm{x}\in \mathcal{X}\mid h(\bm{a},\bm{x})\geq 0\}$ according to the tie-break rule as suggested in \citep{calafiore2006scenario}:
\begin{equation}
    \label{barrier_convex1}
  \begin{split}
    &\textstyle\max_{\bm{a}\in \mathbb{R}^m } \sum_{i=1}^{N'}h(\bm{a},\bm{x}'_i)\\
    \text{s.t.}&
    \begin{cases}
           { h(\bm{a},\bm{x}_0)\geq \epsilon_0,}\\
        h(\bm{a},\bm{f}(\bm{x}_i))\geq \gamma h(\bm{a},\bm{x}_i), i=1,\ldots,N, \\
       a_l \in [-U_{a_l}, U_{a_l}], l=1,\ldots,m, \\
    \end{cases}
    \end{split}
\end{equation}
where \(\{\bm{x}'_i\}_{i=1}^{N'}\) is a set of states evenly sampled from the safe set \(\mathcal{X}\). 
\textbf{This dataset is different from $\mathbb{X}$. It is fixed in the whole computations.} 
Ideally, the linear program would maximize the volume of 
\(\widetilde{\mathbb{S}} = \{\bm{x} \in \mathcal{X} \mid h(\bm{a}, \bm{x}) \ge 0\}\), i.e., 
\(\max_{\bm{a}\in \mathbb{R}^m} \texttt{VOL}(\widetilde{\mathbb{S}})\), 
but computing this directly is intractable. 
Instead, we approximate it by maximizing the surrogate 
\(\sum_{i=1}^{N'} h(\bm{a}, \bm{x}'_i)\), which is linear in \(\bm{a}\) and computationally efficient.
 
Following Proposition \ref{formal_PAC}, we have the conclusion below. 
\begin{proposition}
\label{pro3}
   Let $\alpha, \beta\in (0,1)$, $(\bm{a}^{*}(\mathbb{X}),\lambda^*(\mathbb{X}))$ be the optimal solution to \eqref{one_N} (if $\lambda^*(\mathbb{X})=0$, $\bm{a}^{*}(\mathbb{X})$ is obtained via solving \eqref{barrier_convex1} according to the tie-break rule) with $\mathbb{X}=\{\bm{x}_i\}_{i=1}^m \stackrel{\text{i.i.d.}}{\sim}\textnormal{P}$, and define the computational procedure
   \begin{equation}
\label{cp}
\Xi(\mathbb{X}) \triangleq 
\begin{cases}
1, & \text{if } \lambda^*(\mathbb{X})=0,\\
0, & \text{otherwise}.
\end{cases}
\end{equation}
Then, if $N$ and $m$ satisfies $\alpha\geq {\frac{2}{N}(\ln{\frac{1}{\beta}}+m+1)}$,
the following guarantee holds:
\begin{equation*}
    \begin{split}
 &   \textnormal{P}_{\mathbb{X}}\left[ 
 \begin{split}
 \Xi(\mathbb{X})\Rightarrow  
 \begin{split}
 &\textstyle\textnormal{P}[\bm{f}(\bm{x})\in \widetilde{\mathbb{S}}\mid \bm{x}\in \widetilde{\mathbb{S}}]\\
 &\geq 1-\frac{1}{\textnormal{P}[\bm{x}\in\widetilde{\mathbb{S}}]}\alpha
 \end{split}
 \end{split}
 \right]\geq 1-\beta,
    \end{split}
\end{equation*}
where $\widetilde{\mathbb{S}}=\{\bm{x}\in \mathcal{X}\mid h(\bm{a}^{*}(\mathbb{X}),\bm{x})\geq 0\}$ and $\textnormal{P}_{\mathbb{X}}:=\textnormal{P}^N$.  
\end{proposition}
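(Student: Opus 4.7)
The plan is to cast the linear program \eqref{one_N} as a scenario program of the form \eqref{ROP-N} and then read the conclusion directly from Proposition \ref{formal_PAC}. I would set the decision variable to $\bm{z} = (\bm{a}, \lambda) \in \mathbb{R}^{m+1}$, the uncertain parameter to $\bm{x}$ with distribution $\textnormal{P}_{\bm{x}}$, and the scenario constraint to $g(\bm{z}, \bm{x}) := \gamma h(\bm{a}, \bm{x}) - h(\bm{a}, \bm{f}(\bm{x})) - \lambda \leq 0$. Since $h$ is linear in $\bm{a}$ by Assumption \ref{h}, $g$ is affine (hence convex and continuous) in $\bm{z}$ for every fixed $\bm{x}$, and the remaining constraints of \eqref{one_N} (the box constraints on $\bm{a}$ and $\lambda$ together with $h(\bm{a}, \bm{x}_0) \geq \epsilon_0$) define a convex closed feasible set, verifying Assumption \ref{convex_ass}. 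The choice $\overline{\lambda} \geq \gamma U_{a_1} - C$ makes the program feasible for every sample $\mathbb{X}$, and the tie-break rule implemented by \eqref{barrier_convex1} in the degenerate case $\lambda^*(\mathbb{X}) = 0$ secures Assumption \ref{feasible_uni}.

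With this setup, I would invoke Proposition \ref{formal_PAC} with the decision dimension taken to be $m+1$ (the extra $+1$ coming from $\lambda$). Under the stated sample-size condition $\alpha \geq \frac{2}{N}(\ln\tfrac{1}{\beta} + m + 1)$, it gives
$$\textnormal{P}_{\mathbb{X}}\bigl[V(\bm{z}^*(\mathbb{X})) \leq \alpha\bigr] \geq 1 - \beta,$$
where $V(\bm{z}^*(\mathbb{X})) = \textnormal{P}_{\bm{x}}[\,h(\bm{a}^*(\mathbb{X}), \bm{f}(\bm{x})) < \gamma h(\bm{a}^*(\mathbb{X}), \bm{x}) - \lambda^*(\mathbb{X})\,]$.

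Next, I would translate this violation bound into the conditional statement asserted by the proposition. If $\Xi(\mathbb{X}) = 0$ the implication is vacuously true. On the event $\Xi(\mathbb{X}) = 1$, i.e., $\lambda^*(\mathbb{X}) = 0$, for every $\bm{x} \in \widetilde{\mathbb{S}}$ the inequality $h(\bm{a}^*, \bm{x}) \geq 0$ combined with $g(\bm{z}^*, \bm{x}) \leq 0$ yields $h(\bm{a}^*, \bm{f}(\bm{x})) \geq \gamma h(\bm{a}^*, \bm{x}) \geq 0$, and by Assumption \ref{h} (where $h$ is strictly negative outside $\mathcal{X}$) this forces $\bm{f}(\bm{x}) \in \mathcal{X}$, hence $\bm{f}(\bm{x}) \in \widetilde{\mathbb{S}}$. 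Contrapositively, $\{\bm{x} \in \widetilde{\mathbb{S}} : \bm{f}(\bm{x}) \notin \widetilde{\mathbb{S}}\}$ is contained in the violation set of $\bm{z}^*(\mathbb{X})$, so $\textnormal{P}_{\bm{x}}[\bm{x} \in \widetilde{\mathbb{S}}, \bm{f}(\bm{x}) \notin \widetilde{\mathbb{S}}] \leq V(\bm{z}^*(\mathbb{X}))$. Dividing by $\textnormal{P}_{\bm{x}}[\bm{x} \in \widetilde{\mathbb{S}}]$ (which is strictly positive because $h(\bm{a}^*, \bm{x}_0) \geq \epsilon_0 > 0$ and $h$ is continuous in $\bm{x}$ on $\mathcal{X}$) delivers the desired conditional lower bound $1 - \alpha/\textnormal{P}_{\bm{x}}[\bm{x} \in \widetilde{\mathbb{S}}]$. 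The outer confidence $1 - \beta$ is simply inherited from the scenario bound.

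The only delicate step I anticipate is the tie-break argument. One has to check that the two-stage selection rule, solve \eqref{one_N} and then, if $\lambda^*(\mathbb{X}) = 0$, solve \eqref{barrier_convex1}, yields a single measurable function of $\mathbb{X}$ that coincides with a unique optimizer of a convex scenario program, so that the hypotheses of Proposition \ref{formal_PAC} apply unchanged; the linearity of the surrogate objective $\sum_{i=1}^{N'} h(\bm{a}, \bm{x}'_i)$ in $\bm{a}$ is what makes this reduction compatible with the scenario framework. A second minor technicality, the positivity of $\textnormal{P}_{\bm{x}}[\bm{x} \in \widetilde{\mathbb{S}}]$ required for the conditional probability to be well-defined, is precisely what the deterministic constraint $h(\bm{a}, \bm{x}_0) \geq \epsilon_0$ was inserted into \eqref{one_N} to guarantee.
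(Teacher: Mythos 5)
Your proposal is correct and follows essentially the same route as the paper's own proof: cast \eqref{one_N} as a scenario program, apply Proposition \ref{formal_PAC} to bound the violation probability of the sampled barrier constraint, and then convert that bound into the conditional statement by observing that on the event $\lambda^*(\mathbb{X})=0$ every $\bm{x}\in\widetilde{\mathbb{S}}$ satisfying the constraint has $h(\bm{a}^*,\bm{f}(\bm{x}))\geq \gamma h(\bm{a}^*,\bm{x})\geq 0$ and hence $\bm{f}(\bm{x})\in\widetilde{\mathbb{S}}$. Your write-up is if anything slightly more explicit than the paper's (on the convexity check, the tie-break measurability, and the positivity of $\textnormal{P}_{\bm{x}}[\bm{x}\in\widetilde{\mathbb{S}}]$), but the underlying argument is the same.
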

\begin{pf}
The optimization \eqref{one_N} is a scenario optimization of the following uncertain convex optimization, which satisfies Assumption \ref{convex_ass},  
\begin{equation*}
    \label{barrier_convex0}
  \begin{split}
    &\textstyle\min_{\bm{a}\in \mathbb{R}^m,\lambda\in \mathbb{R}} \lambda\\
    \text{s.t.}&
    \begin{cases}
       { h(\bm{a},\bm{x}_0)\geq \epsilon_0,}\\
        h(\bm{a},\bm{f}(\bm{x}))\geq \gamma h(\bm{a},\bm{x}){-\lambda}, \forall \bm{x}\in \mathcal{X}, \\
        \lambda \in [0,\overline{\lambda}]; ~a_l \in [-U_{a_l};~ U_{a_l}];~ l=1,\ldots,m. \\
    \end{cases}
    \end{split}
\end{equation*}
Since $(\bm{a}^*(\mathbb{X}),\lambda^*(\mathbb{X}))$ is an optimal solution to \eqref{one_N},  according to  Proposition \ref{formal_PAC}, we have 
$\textnormal{P}_{\mathbb{X}}[\textnormal{P}[h(\bm{a}^*(\mathbb{X}),\bm{f(\bm{x})})< \gamma h(\bm{a}^*(\mathbb{X}),\bm{x})-\lambda^*(\mathbb{X})]\leq \alpha] \geq 1-\beta$. Also, since the uniform distribution is assigned to the set $\mathcal{X}$, then, 
$\textnormal{P}_{\mathbb{X}}[\lambda^*(\mathbb{X})=0 \Rightarrow \textnormal{P}[h(\bm{a}^*(\mathbb{X}),\bm{f(\bm{x})})\geq \gamma h(\bm{a}^*(\mathbb{X}),\bm{x})]\geq 1-\frac{1}{\textnormal{P}[\bm{x}\in\widetilde{\mathbb{S}}]}\alpha] \geq 1-\beta$, which can be obtained as follows:
\begin{equation*}
\label{conditional}
    \begin{split}
     &\textnormal{P}[h(\bm{a}^*(\mathbb{X}),\bm{f(\bm{x})})\geq \gamma h(\bm{a}^*(\mathbb{X}),\bm{x})\mid \bm{x}\in \widetilde{\mathbb{S}}]\\
     =&1-\textnormal{P}[h(\bm{a}^*(\mathbb{X}),\bm{f(\bm{x})})< \gamma h(\bm{a}^*(\mathbb{X}),\bm{x})\mid \bm{x}\in \widetilde{\mathbb{S}}]\\
     =&1-\frac{\textnormal{P}[\bm{x} \in \widetilde{\mathbb{S}}\wedge h(\bm{a}^*(\mathbb{X}),\bm{f(\bm{x})})< \gamma h(\bm{a}^*(\mathbb{X}),\bm{x})]}{\textnormal{P}[\bm{x}\in\widetilde{\mathbb{S}}]}\\
     \geq & \textstyle1-\frac{1}{\textnormal{P}[\bm{x}\in\widetilde{\mathbb{S}}]}\alpha.
    \end{split}
\end{equation*}

Further, if $\bm{x}\in \widetilde{\mathbb{S}}$ and $h(\bm{a}^*(\mathbb{X}),\bm{f(\bm{x})})\geq \gamma h(\bm{a}^*(\mathbb{X}),\bm{x})  \geq 0$ holds, then $\bm{f}(\bm{x}) \in \widetilde{\mathbb{S}}$, which implies 
\[
\begin{split}
&\textnormal{P}[\bm{f}(\bm{x})\in \widetilde{\mathbb{S}}\mid  \bm{x}\in \widetilde{\mathbb{S}}]\\
&\geq\textnormal{P}[h(\bm{a}^*(\mathbb{X}),\bm{f(\bm{x})})\geq \gamma h(\bm{a}^*(\mathbb{X}),\bm{x})\mid \bm{x}\in \widetilde{\mathbb{S}}].
\end{split}\]
Consequently, the conclusion holds. \qed
\end{pf}

\subsection{Computing PAC CSISs}
\label{sec:paccsis}
In this section, we present our method for computing PAC CSISs for systems with control inputs, using scenario optimization and discrete-time control barrier functions.

We first review discrete-time control barrier functions, whose zero-superlevel set defines a CSIS, and highlight the added challenges compared to uncontrolled systems.

\begin{definition}
   Given $\gamma\in (0,1)$, a continuous function $h(\cdot): \mathbb{R}^n \rightarrow \mathbb{R}$ is a discrete-time control barrier function if the following inequalities hold with $\lambda=0$:
\begin{equation}
    \begin{cases}
        h(\bm{x})<0, & \forall \bm{x}\in \mathbb{R}^n \setminus \mathcal{X},\\
        \max_{\bm{u}\in \mathbb{U}}h(\bm{f}(\bm{x},\bm{u}))\geq \gamma h(\bm{x})-\lambda, & \forall \bm{x}\in \mathcal{X},\\
       {\lambda \in [0,\overline{\lambda}].}
    \end{cases}
\end{equation}
\end{definition}

\begin{proposition}[\citep{agrawal2017discrete}]
    If $h(\cdot): \mathbb{R}^n \rightarrow \mathbb{R}$ is a discrete-time control barrier function, the set $\{\bm{x}\in \mathcal{X}\mid h(\bm{x})\geq 0\}\neq \emptyset$ is a CSIS.      
\end{proposition}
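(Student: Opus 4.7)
The plan is to mirror the uncontrolled-case argument, replacing the deterministic next state with an explicit selection from the maximizing control input. Let $\mathbb{S}:=\{\bm{x}\in \mathcal{X}\mid h(\bm{x})\geq 0\}$ and fix any $\bm{x}_0\in \mathbb{S}$; I need to exhibit a $\bm{u}\in \mathbb{U}$ with $\bm{f}(\bm{x}_0,\bm{u})\in \mathbb{S}$, which is precisely the one-step condition of Definition \ref{c_safe}.

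First, I would argue that the maximum in the barrier inequality is attained. Since $\mathbb{U}$ is compact (by the standing hypothesis on control inputs), $\bm{f}(\bm{x}_0,\cdot)$ is continuous in $\bm{u}$, and $h$ is continuous, the mapping $\bm{u}\mapsto h(\bm{f}(\bm{x}_0,\bm{u}))$ is continuous on the compact set $\mathbb{U}$ and thus attains its maximum at some $\bm{u}^{*}\in \mathbb{U}$. Applying the barrier inequality with $\lambda=0$ at $\bm{x}_0$,
\[
h(\bm{f}(\bm{x}_0,\bm{u}^{*}))=\max_{\bm{u}\in \mathbb{U}} h(\bm{f}(\bm{x}_0,\bm{u}))\geq \gamma h(\bm{x}_0)\geq 0,
\]
since $h(\bm{x}_0)\geq 0$ and $\gamma\in(0,1)$.

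Next, I would invoke the first barrier condition $h(\bm{x})<0$ for $\bm{x}\notin \mathcal{X}$; its contrapositive yields $h(\bm{x})\geq 0\Rightarrow \bm{x}\in \mathcal{X}$. Combined with the displayed inequality, this gives $\bm{f}(\bm{x}_0,\bm{u}^{*})\in \mathcal{X}$ together with $h(\bm{f}(\bm{x}_0,\bm{u}^{*}))\geq 0$, hence $\bm{f}(\bm{x}_0,\bm{u}^{*})\in \mathbb{S}$. Since $\bm{x}_0\in \mathbb{S}$ was arbitrary, $\mathbb{S}$ is a CSIS. The non-emptiness assertion is an additional hypothesis on the barrier function in this statement; in the data-driven construction to follow it is enforced analogously to the constraint $h(\bm{a},\bm{x}_0)\geq \epsilon_0$ in \eqref{one_N}, which ensures at least one witness state has strictly positive barrier value.

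There is no deep obstacle here; the single delicate point is the existence of a maximizing $\bm{u}^{*}$, which is supplied cleanly by compactness of $\mathbb{U}$ together with continuity of $h$ and $\bm{f}$. Everything else is a direct chain of implications from the two defining inequalities when $\lambda=0$.
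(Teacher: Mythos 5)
Your proof is correct. Note that the paper itself gives no argument for this proposition --- it is imported by citation from \cite{agrawal2017discrete} --- so there is no in-paper proof to compare against; your write-up is the standard self-contained argument one would expect. The one genuinely delicate point, which you handle properly, is that the defining inequality only bounds $\max_{\bm{u}\in\mathbb{U}} h(\bm{f}(\bm{x}_0,\bm{u}))$ from below, whereas Definition~\ref{c_safe} demands an explicit witness $\bm{u}^*$; compactness of $\mathbb{U}$ together with continuity of $h$ and $\bm{f}(\bm{x}_0,\cdot)$ guarantees the maximum is attained, and this attainment is genuinely needed for boundary states with $h(\bm{x}_0)=0$, where a supremum equal to $0$ but not attained would leave $h(\bm{f}(\bm{x}_0,\bm{u}))\geq 0$ unachievable for any single $\bm{u}$. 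The remaining steps --- $\gamma h(\bm{x}_0)\geq 0$ and the contrapositive of $h<0$ outside $\mathcal{X}$ forcing $\bm{f}(\bm{x}_0,\bm{u}^*)\in\mathcal{X}$ --- are exactly right, and your reading of the non-emptiness clause as an added hypothesis rather than a conclusion is the sensible one, since the empty set vacuously satisfies Definition~\ref{c_safe}.
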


The computational challenge associated with discrete-time control barrier functions stems from $\max_{\bm{u}\in \mathbb{U}}h(\bm{f}(\bm{x},\bm{u}))\geq \gamma h(\bm{x})-\lambda, \forall \bm{x}\in \mathcal{X}$, where the term $\max_{\bm{u}\in \mathbb{U}}h(\bm{f}(\bm{x},\bm{u}))$ introduces significant computational complexity.  A strategy will help us to remove the $\max$ operator and facilitate the computations. It is employing the empirical average: 
\begin{equation}
\label{samples}
\sum_{j=1}^M w_j(\bm{x}) h(\bm{f}(\bm{x},\bm{u}_j))\geq \gamma h(\bm{x})-\lambda, \forall \bm{x}\in \mathcal{X},
\end{equation}
where $\{\bm{u}_{j}\}_{j=1}^M$ are the grid points from the uniform discretization of the set $\mathbb{U}$, and $w_{j}(\cdot): \mathbb{R}^n\rightarrow \mathbb{R}_{\geq 0}$ with $\sum_{j=1}^M w_{j}(\bm{x})=1$. Since  $\sum_{j=1}^M w_j(\bm{x}) h(\bm{f}(\bm{x},\bm{u}_j)) \leq \max_{\bm{u}\in \mathbb{U}}h(\bm{f}(\bm{x},\bm{u}))$ for $\bm{x}\in \mathcal{X}$, we have that \eqref{samples} implies $\max_{\bm{u}\in \mathbb{U}}h(\bm{f}(\bm{x},\bm{u}))\geq \gamma h(\bm{x})-\lambda, \forall \bm{x}\in \mathcal{X}$.

When using \eqref{samples} for computations, we need to determine the function $w_{j}(\cdot): \mathbb{R}^n\rightarrow \mathbb{R}_{\geq 0}$ appropriately. However, the function $\bm{f}(\cdot,\cdot): \mathbb{R}^n\times \mathbb{R}^s \rightarrow \mathbb{R}^n$ is unknown, this determination is nontrivial. To address this, we propose an $\epsilon$-greedy iteration algorithm to learn these functions from a finite dataset
$X^L=\{\big(\bm{x}_i^L,\bm{u}_{j},\bm{y}_{i,j}^L\big)\}_{i=1,j=1}^{N,M}$ (\textbf{which is distinct from $\mathbb{X}$ and is only used to select the functions $w_{j}(\cdot): \mathbb{R}^n \rightarrow \mathbb{R}_{\ge 0}$, and is therefore fixed in our computations}), with $\bm{x}_i^L \in \mathcal{X}(i=1,\ldots,N)$ and $\bm{y}_{i,j}^L=\bm{f}(\bm{x}_i^L,\bm{u}_{j})$ observed by simulating the system \eqref{system_c} with the initial state $\bm{x}_i^L$ and control input $\bm{u}_{j}$. 

Based on the dataset $X^L$, the constraint \eqref{samples} is relaxed to 
\begin{equation}
\label{average}
\sum_{j=1}^M w_{j}^L(\bm{x}_i^L) h(\bm{y}_{i,j}^L)\geq \gamma h(\bm{x}_i^L)-\lambda, i=1,\ldots,N,
\end{equation}
where $w_{j}^L(\cdot): \mathbb{R}^n\rightarrow \mathbb{R}_{\geq 0}$ with $\sum_{j=1}^M w_{j}^L(\bm{x})=1$. Using a parameterized function $h(\bm{a},\cdot): \mathbb{R}^n\rightarrow \mathbb{R}$ satisfying Assumption \ref{h}, we first construct a linear program to compute $\bm{a}$ below:
\begin{equation}
\label{N_N_u0}
  \begin{split}
    &\textstyle\min_{\lambda \in \mathbb{R}, \bm{a}=(a_1,\ldots,a_m)^{\top}\in \mathbb{R}^m } \lambda\\
    \text{s.t.}&
    \begin{cases}
    { h(\bm{a},\bm{x}_0)\geq \epsilon_0,}\\
        \sum_{j=1}^M w_{j}^L(\bm{x}_i^L) h(\bm{a},\bm{y}_{i,j}^L)\geq \gamma h(\bm{a},\bm{x}_i^L)-\lambda, \\
        \lambda \in [0, \overline{\lambda}], a_l \in [-U_{a_l}, U_{a_l}], \\
        i=1,\ldots,N; l=1,\ldots, m, 
    \end{cases}
    \end{split}
\end{equation}
where $w_{j}^L(\bm{x})=\frac{1}{M}$ for $j=1,\ldots,M$, 
and $\gamma\in (0,1)$, $\overline{\lambda}>0$, and $U_{a_l}$ have the same meaning as before, $l=1,\ldots,m$. 

In \eqref{N_N_u0}, we give all sampled control inputs equal weight $\frac{1}{M}$ instead of choosing one randomly, which reduces variance. Let \((\bm{a}_0^*, \lambda_0^*)\) be the optimal solution. Then, we have $\max_{\bm{u}\in \mathbb{U}}h(\bm{a}^*_0,\bm{f}(\bm{x},\bm{u})) \geq \sum_{j=1}^M w_{j}(\bm{x}_i^L) h(\bm{a}_0^*,\bm{y}_{i,j}^L)\geq \gamma h(\bm{a}^*_0,\bm{x}_i^L)-\lambda_0^*$.

    If $\lambda_0^*\neq 0$, we further update the barrier function to refine $\lambda_0^*$. Without loss of generality, we assume $\bm{u}_i=\arg\max_{\bm{u}_j,j=1,\ldots,M} h(\bm{a}_0^{*},\bm{f}(\bm{x}_i^L,\bm{u_j}))$, $i=1,\ldots,N$.  It is observed that, for $i=1,\ldots,N$, 
\begin{equation}
\label{epsilon}
    (1-\epsilon)h(\bm{a}_0^*,\bm{y}_{i,i}^L)+ \sum_{j=1}^M \frac{\epsilon}{M} h(\bm{a}_0^*,\bm{y}_{i,j}^L)\geq \sum_{j=1}^M \frac{1}{M} h(\bm{a}_0^*,\bm{y}_{i,j}^L),
\end{equation}
where $\epsilon \in (0,1]$ is a user specified value. Therefore, we next construct a linear program to update $\bm{a}$ in the parameterized barrier function $h(\bm{a},\bm{x})$ and $\lambda$ as follows:
\begin{equation}
\label{N_N_u2}
  \begin{split}
    &\textstyle\min_{\lambda \in \mathbb{R}, \bm{a}\in \mathbb{R}^m } \lambda\\
    \text{s.t.}&
    \begin{cases}
          { h(\bm{a},\bm{x}_0)\geq \epsilon_0,}\\
        \sum_{j=1}^M w_{j}^L(\bm{x}_i^L) h(\bm{a},\bm{y}_{i,j}^L)\geq \gamma h(\bm{a},\bm{x}_i^L)-\lambda, \\
        \lambda \in [0, \overline{\lambda}], a_l \in [-U_{a_l}, U_{a_l}], \\
        i=1,\ldots,N; l=1,\ldots, m, 
    \end{cases}
    \end{split}
\end{equation}
where $w_j^L(\bm{x})=$
\begin{equation}
\label{weight0}
\begin{cases}
    1-\epsilon+\frac{\epsilon}{M}, & \text{if~}\bm{u}_j=\argmax\limits_{{\{\bm{u}_t\}_{t=1}^M}} h(\bm{a}_0^{*},\bm{f}(\bm{x},\bm{u}_t)),\\
    \frac{\epsilon}{M},  &\text{otherwise}
\end{cases}
\end{equation}
for $j=1,\ldots,M$, $\gamma\in (0,1)$, $\overline{\lambda}>0$, and $U_{a_l}$ have the same meaning as before, $l=1,\ldots,m$.  

\begin{proposition}
\label{xi_de}
    Let $(\lambda_1^{*},\bm{a}_1^*)$ be the optimal solution to \eqref{N_N_u2}. $\lambda_1^{*} \leq \lambda_0^*$ holds.
\end{proposition}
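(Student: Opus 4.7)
\noindent\textbf{Proof proposal for Proposition \ref{xi_de}.} The plan is to show that the optimal solution $(\bm{a}_0^*,\lambda_0^*)$ of \eqref{N_N_u0} is itself feasible for the updated linear program \eqref{N_N_u2}. Because both programs minimize the same scalar $\lambda$ over a feasible set having the same box constraints on $\bm{a}$ and $\lambda$ and the same anchor constraint $h(\bm{a},\bm{x}_0)\ge \epsilon_0$, once feasibility of $(\bm{a}_0^*,\lambda_0^*)$ in \eqref{N_N_u2} is established, we immediately get $\lambda_1^* \le \lambda_0^*$.

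The only constraints that differ between \eqref{N_N_u0} and \eqref{N_N_u2} are the $N$ barrier inequalities, since the weights $w_j^L$ change from the uniform choice $1/M$ to the $\epsilon$-greedy choice in \eqref{weight0}. For each $i\in\{1,\ldots,N\}$, using the relabelling convention $\bm{u}_i=\arg\max_{\bm{u}_j} h(\bm{a}_0^{*},\bm{f}(\bm{x}_i^L,\bm{u}_j))$ made just before \eqref{epsilon}, the new weighted average can be rewritten as
\[
\sum_{j=1}^M w_j^L(\bm{x}_i^L)\,h(\bm{a}_0^*,\bm{y}_{i,j}^L) \;=\; (1-\epsilon)\,h(\bm{a}_0^*,\bm{y}_{i,i}^L) + \sum_{j=1}^M \frac{\epsilon}{M}\,h(\bm{a}_0^*,\bm{y}_{i,j}^L).
\]
By the maximality of $\bm{u}_i$ we have $h(\bm{a}_0^*,\bm{y}_{i,i}^L)\ge \frac{1}{M}\sum_{j=1}^M h(\bm{a}_0^*,\bm{y}_{i,j}^L)$, and since $1-\epsilon\ge 0$ this is exactly the content of inequality \eqref{epsilon}. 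Thus the new weighted average dominates the uniform one.

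Combining this with feasibility of $(\bm{a}_0^*,\lambda_0^*)$ in \eqref{N_N_u0} gives, for every $i$,
\[
\sum_{j=1}^M w_j^L(\bm{x}_i^L)\,h(\bm{a}_0^*,\bm{y}_{i,j}^L) \;\ge\; \sum_{j=1}^M \frac{1}{M}\,h(\bm{a}_0^*,\bm{y}_{i,j}^L) \;\ge\; \gamma\, h(\bm{a}_0^*,\bm{x}_i^L) - \lambda_0^*,
\]
so the $N$ updated barrier constraints of \eqref{N_N_u2} are satisfied at $(\bm{a}_0^*,\lambda_0^*)$. Together with the unchanged constraints, this certifies $(\bm{a}_0^*,\lambda_0^*)$ as a feasible point of \eqref{N_N_u2}, hence $\lambda_1^*\le\lambda_0^*$.

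There is essentially no hard step here; the argument is a direct feasibility comparison driven by \eqref{epsilon}. The only point to be careful about is the relabelling of the controls, which must be kept consistent so that the index $i$ appearing in $\bm{y}_{i,i}^L$ truly picks out the maximizer of $h(\bm{a}_0^*,\bm{f}(\bm{x}_i^L,\cdot))$ used in the definition \eqref{weight0}. Once this bookkeeping is in place, the rest is a chain of two inequalities.
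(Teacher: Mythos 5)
Your proposal is correct and follows essentially the same route as the paper: the paper's proof simply states that inequality \eqref{epsilon} makes $(\lambda_0^*,\bm{a}_0^*)$ feasible for \eqref{N_N_u2}, hence $\lambda_1^*\le\lambda_0^*$. You have merely spelled out the intermediate steps (why the $\epsilon$-greedy weighted average dominates the uniform one via the maximality of $\bm{u}_i$) that the paper leaves implicit.
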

\begin{pf}
    \eqref{epsilon} implies that $(\lambda_0^*,\bm{a}_0^*)$ is a feasible solution to \eqref{N_N_u2}. Thus, $\lambda_1^{*} \leq \lambda_0^*$.  \qed
\end{pf}

After solving the optimization \eqref{N_N_u2} to find the optimal solution $(\lambda_1^*, \bm{a}_1^*)$, if $\lambda_1^* \neq 0$, we update the weight function $w_j^L(\cdot): \mathbb{R}^n \rightarrow \mathbb{R}$ using the $\epsilon$-greedy strategy \eqref{weight0}. In this update, we replace $h(\bm{a}_0^*,\bm{x})$ with the newly computed function $h(\bm{a}_1^*,\bm{x})$. We then solve a linear program of the form \eqref{N_N_u2} to further refine $\lambda_1^*$. We assume $\bm{u}_{i_j}=\arg\max_{\bm{u}_j,j=1,\ldots,M} h(\bm{a}_1^{*},\bm{f}(\bm{x}_i^L,\bm{u}_j))$, $i=1,\ldots,N$.  It is observed that, for $i=1,\ldots,N$, 
\begin{equation}
\label{epsilon2}
\begin{split}
    &\textstyle(1-\epsilon)h(\bm{a}_1^*,\bm{y}_{i,i_j}^L)+ \sum_{j=1}^M \frac{\epsilon}{M} h(\bm{a}_1^*,\bm{y}_{i,j}^L) \\
    \geq & (1-\epsilon)h(\bm{a}_1^*,\bm{y}_{i,i}^L)+ \sum_{j=1}^M \frac{\epsilon}{M} h(\bm{a}_1^*,\bm{y}_{i,j}^L),
\end{split}
\end{equation}
where $\epsilon \in (0,1]$ is the user-specified value in \eqref{epsilon}. Thus, $(\lambda_1^*,\bm{a}_1^*)$ is also a feasible solution to the new linear program. Let $(\lambda_2^*,\bm{a}_2^*)$ be the new optimal solution. Thus, we can ensure $\lambda_2^* \leq \lambda_1^*$. This iterative process is repeated until either the computed optimal value $\lambda$ is zero or a specified maximum number of iterations is reached.

If the maximum number of iterations is reached with \(\lambda \neq 0\), the process stops. 
Otherwise, at iteration \(k\), let \((\lambda_k^*, \bm{a}_k^*)\) be the solution with \(\lambda_k^* = 0\). 
We then update the weights \(w_j^L(\cdot)\) so that 
\(\sum_{j=1}^M w_j^L(\bm{x}) = 1\), and iteratively expand the set 
\(\{\bm{x}\in \mathcal{X} \mid h(\bm{a},\bm{x})\ge 0\}\) 
by solving a linear program at each step. 
The first iteration of solving a linear program \eqref{N_N_u3} is described below; later iterations follow the same procedure with \(\epsilon\)-greedy updates.
\begin{equation}
\label{N_N_u3}
  \begin{split}
    &\textstyle\max_{\bm{a}\in \mathbb{R}^m }  \sum_{i=1}^{N'}h(\bm{a},\bm{x}'_i)\\
    \text{s.t.}&
    \begin{cases}
        {h(\bm{a},\bm{x}_0)\geq \epsilon_0,}\\
        \sum_{j=1}^M w^L_{j}(\bm{x}_i^L) h(\bm{a},\bm{y}_{i,j}^L)\geq \gamma h(\bm{a},\bm{x}_i^L), \\
        a_l \in [-U_{a_l}, U_{a_l}]; i=1,\ldots,N;l=1,\ldots, m, 
    \end{cases}
    \end{split}
\end{equation}
where $w_j^L(\bm{x})$=
\begin{equation}
\label{weight}
\begin{cases}
    1-\epsilon+\frac{\epsilon}{M},&\text{if~}\bm{u}_j=\argmax\limits_{\{\bm{u}_t\}_{t=1}^M} h(\bm{a}_{k}^{*},\bm{f}(\bm{x},\bm{u}_t)),\\
    \frac{\epsilon}{M},&\text{otherwise}.
\end{cases}
\end{equation}
for $j=1,\ldots,M$, and $\gamma\in (0,1)$, $\overline{\lambda}>0$, and $U_{a_l}$ have the same meaning as before, $l=1,\ldots,m$.  \textbf{Similarly, \textit{$\{\bm{x}'_i\}_{i=1}^{N'}$ is fixed through the whole computations, as the one in \eqref{barrier_convex1}.} } 

The iterative process of updating the functions $w_j^L(\cdot) : \mathbb{R}^n \rightarrow \mathbb{R}, j = 1, \ldots, M$, which helps to enlarge the set
$\widetilde{\mathbb{S}}^L = \{\bm{x} \in \mathcal{X} \mid h(\bm{a}, \bm{x}) \ge 0\}$, stops when either the change in the objective value between iterations,  
$\Big|\sum_{i=1}^{N'}h(\bm{a}^*_k,\bm{x}'_i)-\sum_{i=1}^{N'}h(\bm{a}^*_{k-1},\bm{x}'_i)\Big|$, is below a threshold, or a maximum number of iterations is reached.
If the objective is replaced by 
\(\max_{\bm{a}\in \mathbb{R}^m} \texttt{VOL}(\widetilde{\mathbb{S}}^L)\), 
the set volume is guaranteed not to decrease at each step 
(proof similar to Proposition \ref{xi_de}, omitted). This process is summarized in Alg. \ref{alg1}. 
\begin{algorithm}[t]
\caption{Learning $w^L_j(\cdot): \mathbb{R}^n\rightarrow \mathbb{R}_{\geq 0}, j=1,\ldots,M$.}
    \begin{algorithmic}[1]
    \Require a probability error $\alpha\in [0,1)$; a confidence level $\beta \in [0,1)$; a parameterized barrier function $h(\bm{a},\cdot): \mathbb{R}^n\rightarrow \mathbb{R}$ of type \eqref{h_type} with $\bm{a}\in \mathbb{R}^m$; {$\bm{x}_0\in \mathcal{X}$; $\epsilon_0>0$; $\overline{\lambda}>0$;} $\gamma \in (0,1)$; $\epsilon \in (0,1)$; the smallest integer $N$ satisfying $\alpha\geq \frac{2}{N}(\ln\frac{1}{\beta}+m+1)$;
    datasets $X'=\{\bm{x}'_i\}_{i=1}^{N'}$ and $X^L=\{\big(\bm{x}_i^L,\bm{u}_{j},\bm{y}_{i,j}^L\big)\}_{i=1,j=1}^{N,M}$; termination threshold $\epsilon'>0$; maximum iteration numbers $K$ and $K'$.
    \Ensure  $w_j^L(\cdot):\mathbb{R}^n \rightarrow \mathbb{R}_{\geq 0}, j=1,\ldots,M$. 
    \State Initialize $w^L_j(\bm{x})=\frac{1}{M}$ for $\bm{x}\in \mathcal{X}$, $j=1,\ldots,M$;
     \State Solve the optimization \eqref{N_N_u0} to obtain $\lambda_0^*$ and $\bm{a}_0^*$;
    \State $k:=0$;
    \While{$\lambda_k^*>0$ and $k< K$}
    \State $k:=k+1$;
    \State Update $w_j^L(\bm{x})$ according to \eqref{weight0}, replacing  $h(\bm{a}_{k-1}^{*},\bm{f}(\bm{x},\bm{u}))$ with $h(\bm{a}_{k}^{*},\bm{f}(\bm{x},\bm{u}))$; 
    \State Solve the optimization \eqref{N_N_u2} to obtain $\lambda_k^*$ and $\bm{a}_k^*$;
    \EndWhile
    \If{$k\geq K$ and $\lambda_k^*>0$}
      \Return "fail". 
     \EndIf
    \While{$k< K'$}
    \State $k:=k+1$;
    \State Update $w_j^L(\bm{x})$ according to \eqref{weight}, replacing $h(\bm{a}_{k-1}^{*},\bm{f}(\bm{x},\bm{u}))$ with $h(\bm{a}_{k}^{*},\bm{f}(\bm{x},\bm{u}))$;
    \State Solve the optimization \eqref{N_N_u3} to obtain $\bm{a}_k^*$;
    \If{$|\sum_{i=1}^{N'} h(\bm{a}^*_k,\bm{x}'_i) - \sum_{i=1}^{N'} h(\bm{a}^*_{k-1},\bm{x}'_i)|\leq \epsilon'$}
    \State \textbf{break}
    \EndIf
    \EndWhile \\
    \Return $w_j^L(\cdot):\mathbb{R}^n \rightarrow \mathbb{R}_{\geq 0}, j=1,\ldots,M$.
    \end{algorithmic}
    \label{alg1}
\end{algorithm}

Finally, we solve a linear program, which is constructed from the sample set $\mathbb{X}=\{\bm{x}_i\}_{i=1}^N$ and the \textit{fixed} functions $w_j^L(\cdot):\mathbb{R}^n \rightarrow \mathbb{R}, j=1,\ldots,M$ returned by Alg. \ref{alg1}:
\begin{equation}
\label{N_N_u_r_l1}
  \begin{split}
    &\textstyle\min_{\bm{a}\in \mathbb{R}^m,\lambda\in \mathbb{R}}  {\lambda}\\
    \text{s.t.}&
    \begin{cases}
        { h(\bm{a},\bm{x}_0)\geq \epsilon_0,}\\
        h'(\bm{a},\bm{x}_i)\geq 0, i=1,\ldots,N,\\
        \lambda \in [0,\overline{\lambda}]; a_l \in [-U_{a_l}, U_{a_l}], l=1,\ldots, m, 
    \end{cases}
    \end{split}
\end{equation}
where $h'(\bm{a},\bm{x}_i)=\sum_{j=1}^M w^L_{j}(\bm{x}_i) h(\bm{a},\bm{f}(\bm{x}_i,\bm{u}_j))\geq \gamma h(\bm{a},\bm{x}_i)-\lambda$. \eqref{N_N_u_r_l1} is a scenario optimization to the following uncertain convex optimization: 
    \begin{equation*}
\label{N_N_u_r}
  \begin{split}
    &\textstyle{\min_{\bm{a}\in \mathbb{R}^m,\lambda \in \mathbb{R}}  \lambda}\\
    \text{s.t.}&
    \begin{cases}
   { h(\bm{a},\bm{x}_0)\geq \epsilon_0,}\\
        \sum_{j=1}^M w_{j}^L(\bm{x}) h(\bm{a},\bm{f}(\bm{x},\bm{u}_j))\geq \gamma h(\bm{a},\bm{x})-\lambda, \forall \bm{x}\in \mathcal{X}, \\
        \lambda \in [0,\overline{\lambda}];~ a_l \in [-U_{a_l}, U_{a_l}];~ l=1,\ldots, m. \\
    \end{cases}
    \end{split}
\end{equation*}

Let $(\bm{a}^{*}(\mathbb{X}),\lambda^*(\mathbb{X}))$ be an optimal solution to \eqref{N_N_u_r_l1}. If $\lambda^*(\mathbb{X})=0$, we proceed with addressing the following optimization \eqref{barrier_convex11} that maximizes the size of the set $\widetilde{\mathbb{S}}=\{\bm{x}\in \mathcal{X}\mid h(\bm{a},\bm{x})\geq 0\}$ according to a tie-break rule as suggested in \citep{calafiore2006scenario}:
\begin{align}
    \label{barrier_convex11}
    &\textstyle\min_{\bm{a}\in \mathbb{R}^m}  {\sum_{i=1}^{N'}h(\bm{a},\bm{x}'_i)}\\
    \text{s.t.}&
    \begin{cases}
        { h(\bm{a},\bm{x}_0)\geq \epsilon_0,}\\
        \sum_{j=1}^M w_{j}^L(\bm{x}_i) h(\bm{a},\bm{f}(\bm{x}_i,\bm{u}_j))\geq \gamma h(\bm{a},\bm{x}_i),  \\
        a_l \in [-U_{a_l}, U_{a_l}]; ~
         l=1,\ldots, m; ~i=1,\ldots, N.
    \end{cases}\notag
\end{align}

    \begin{theorem}
\label{theo:1}
   Let $\alpha, \beta\in (0,1)$, $(\bm{a}^{*}(\mathbb{X}),\lambda^*(\mathbb{X}))$ be the optimal solution to \eqref{N_N_u_r_l1} (if $\lambda^*(\mathbb{X})=0$, $\bm{a}^{*}(\mathbb{X})$ is obtained via solving \eqref{barrier_convex11} according to a tie-break rule) with $\mathbb{X}=\{\bm{x}_i\}_{i=1}^m \stackrel{\text{i.i.d.}}{\sim}\textnormal{P}$, and define the computational procedure as in \eqref{cp}.
Then, if $N$ and $m$ satisfy $\alpha\geq {\frac{2}{N}(\ln{\frac{1}{\beta}}+m+1)}$,
the following guarantee holds:
\begin{equation*}
    \begin{split}
 &   \textnormal{P}_{\mathbb{X}}\left[ 
 \begin{split}
 \Xi(\mathbb{X})\Rightarrow  
 \begin{split}
 &\textstyle\textnormal{P}[\exists \bm{u}\in \mathbb{U}, \bm{f}(\bm{x},\bm{u})\in \widetilde{\mathbb{S}}\mid \bm{x}\in \widetilde{\mathbb{S}}]\\
 &\geq 1-\frac{1}{\textnormal{P}[\bm{x}\in\widetilde{\mathbb{S}}]}\alpha
 \end{split}
 \end{split}
 \right]\geq 1-\beta,
    \end{split}
\end{equation*}
where $\widetilde{\mathbb{S}}=\{\bm{x}\in \mathcal{X}\mid h(\bm{a}^{*}(\mathbb{X}),\bm{x})\geq 0\}$ and $\textnormal{P}_{\mathbb{X}}:=\textnormal{P}^N$.
    \end{theorem}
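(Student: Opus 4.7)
The plan is to mimic the proof of Proposition \ref{pro3}, adapting it to the controlled setting in which $\max_{\bm{u}\in \mathbb{U}} h(\bm{a},\bm{f}(\bm{x},\bm{u}))$ has been surrogated by the convex combination $\sum_{j=1}^M w_j^L(\bm{x}) h(\bm{a},\bm{f}(\bm{x},\bm{u}_j))$ with weights learned offline by Alg.\ \ref{alg1} and held \emph{fixed} for \eqref{N_N_u_r_l1}. First I would verify that \eqref{N_N_u_r_l1} is indeed a scenario counterpart of an uncertain convex program complying with Assumption \ref{convex_ass}. Since $h(\bm{a},\cdot)$ is linear in $\bm{a}$ by Assumption \ref{h}, since the $w_j^L(\bm{x})$ are fixed nonnegative constants summing to one, and since $\lambda$ enters linearly, each constraint indexed by a sample $\bm{x}_i$ is linear (hence convex) in the $(m+1)$-dimensional decision variable $(\bm{a},\lambda)$. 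The sample-complexity condition $\alpha\geq \frac{2}{N}(\ln\frac{1}{\beta}+m+1)$ is precisely what Proposition \ref{formal_PAC} requires for dimension $m+1$.

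Next I would invoke Proposition \ref{formal_PAC} to obtain, with $\textnormal{P}_{\mathbb{X}}$-probability at least $1-\beta$, the violation bound $\textnormal{P}_{\bm{x}}\bigl[\sum_{j=1}^M w_j^L(\bm{x}) h(\bm{a}^*(\mathbb{X}),\bm{f}(\bm{x},\bm{u}_j)) < \gamma h(\bm{a}^*(\mathbb{X}),\bm{x}) - \lambda^*(\mathbb{X})\bigr] \leq \alpha$. Restricting to the event $\Xi(\mathbb{X})=1$, i.e.\ $\lambda^*(\mathbb{X})=0$, the inequality simplifies, and the same conditional-probability manipulation used inside the proof of Proposition \ref{pro3} yields $\textnormal{P}_{\bm{x}}\bigl[\sum_{j=1}^M w_j^L(\bm{x}) h(\bm{a}^*,\bm{f}(\bm{x},\bm{u}_j)) \geq \gamma h(\bm{a}^*,\bm{x}) \,\big|\, \bm{x}\in \widetilde{\mathbb{S}}\bigr] \geq 1 - \alpha/\textnormal{P}_{\bm{x}}[\bm{x}\in\widetilde{\mathbb{S}}]$.

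The final and most delicate step, which has no counterpart in the uncontrolled case, is to turn this averaged inequality into the existential statement that appears in the PAC CSIS guarantee. For any $\bm{x}\in \widetilde{\mathbb{S}}$ satisfying the averaged inequality, $h(\bm{a}^*,\bm{x})\geq 0$ forces $\sum_{j=1}^M w_j^L(\bm{x}) h(\bm{a}^*,\bm{f}(\bm{x},\bm{u}_j)) \geq 0$. Because the weights are nonnegative and sum to one, this convex combination is bounded above by $\max_{j} h(\bm{a}^*,\bm{f}(\bm{x},\bm{u}_j))$, so there exists an index $j^\star$ with $h(\bm{a}^*,\bm{f}(\bm{x},\bm{u}_{j^\star}))\geq 0$, i.e.\ $\bm{f}(\bm{x},\bm{u}_{j^\star})\in \widetilde{\mathbb{S}}$ with $\bm{u}_{j^\star}\in \mathbb{U}$. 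Hence the event $\{\text{averaged inequality holds}\}\cap\{\bm{x}\in\widetilde{\mathbb{S}}\}$ is contained in $\{\exists \bm{u}\in \mathbb{U},\ \bm{f}(\bm{x},\bm{u})\in \widetilde{\mathbb{S}}\}\cap\{\bm{x}\in\widetilde{\mathbb{S}}\}$, which lifts the conditional bound to the desired one and completes the argument.

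The main obstacle I anticipate is just the last bridging step: ensuring that the grid $\{\bm{u}_j\}_{j=1}^M\subseteq \mathbb{U}$ used to build the surrogate in fact participates as the \emph{same} finite set of admissible controls that witnesses the existential statement, and that the $\epsilon$-greedy training in Alg.\ \ref{alg1} does not affect this implication because the weights are frozen before \eqref{N_N_u_r_l1} is solved. The remainder of the reasoning is a routine copy of the uncontrolled proof once the dimension is correctly read as $m+1$ in Proposition \ref{formal_PAC}.
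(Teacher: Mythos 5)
Your proposal is correct and follows essentially the same route as the paper: invoke Proposition \ref{formal_PAC} on the scenario program \eqref{N_N_u_r_l1} with decision dimension $m+1$ (the weights $w_j^L$ being frozen so the constraints are linear in $(\bm{a},\lambda)$), perform the conditional-probability manipulation of Proposition \ref{pro3}, and bridge the averaged surrogate to the existential claim via the inequality $\sum_{j=1}^M w_j^L(\bm{x})\,h(\bm{a}^*,\bm{f}(\bm{x},\bm{u}_j)) \le \max_j h(\bm{a}^*,\bm{f}(\bm{x},\bm{u}_j)) \le \max_{\bm{u}\in\mathbb{U}} h(\bm{a}^*,\bm{f}(\bm{x},\bm{u}))$. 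Your finite-grid witness $\bm{u}_{j^\star}$ is exactly the intermediate term in the paper's chain of inequalities, so there is no substantive difference.
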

\begin{pf}
According to Proposition \ref{formal_PAC}, we have that, 
\[\textnormal{P}_{\mathbb{X}}\left[\textnormal{P}\left[\begin{split}
&\sum_{j=1}^M w_{j}^L(\bm{x}) h(\bm{a}^*(\mathbb{X}),\bm{f}(\bm{x},\bm{u}_j))\\
&\geq \gamma h(\bm{a}^*(\mathbb{X}),\bm{x})-\lambda^*(\mathbb{X})
\end{split}
\right]\geq 1-\alpha\right]\geq 1-\beta.\]
On the other hand, since 
\begin{equation*}
\label{max_in}
\begin{split}
   \max_{\bm{u}\in \mathbb{U}} {h(\bm{a}^*(\mathbb{X}),\bm{f}(\bm{x},\bm{u}))} \geq &\max\{h(\bm{a}^*(\mathbb{X}),\bm{f}(\bm{x},\bm{u}_j))\}_{j=1}^M \\
   \geq &\sum_{j=1}^M w_{j}^L(\bm{x}) h(\bm{a}^*(\mathbb{X}),\bm{f}(\bm{x},\bm{u}_j)) 
\end{split}
\end{equation*}
for $\bm{x}\in \mathcal{X}$, we have
\[\textnormal{P}_{\mathbb{X}}\left[\textnormal{P}\left[\begin{split}
&\max_{\bm{u}\in \mathbb{U}} {h(\bm{a}^*(\mathbb{X}),\bm{f}(\bm{x},\bm{u}))}\\
&\geq \gamma h(\bm{a}^*(\mathbb{X}),\bm{x})-\lambda^*(\mathbb{X})
\end{split}
\right]\geq 1-\alpha\right]\geq 1-\beta.\]
Finally, the conclusion can be obtained via following the proof of Proposition \ref{pro3}. \qed
\end{pf}

The method mainly involves solving linear programs, with complexity depending on 
\(\alpha\), \(\beta\), and \(m\), not the state space size. This makes it practical 
for large systems, and engineering knowledge can help choose suitable barrier templates. The condition 
\(\alpha \ge \frac{2}{N}\big(\ln\frac{1}{\beta} + m + 1\big)\) 
shows that \(\beta = 0\) would require infinite samples. In practice, a small 
\(\beta\) (e.g., \(10^{-20}\)) only slightly increases the required samples. 
With this, we can be almost certain that every state 
\(\bm{x} \in \widetilde{\mathbb{S}}\) has a path to a safe state in \(\widetilde{\mathbb{S}}\) with probability 
at least \(1 - \frac{\alpha}{\textnormal{P}[\bm{x} \in \widetilde{\mathbb{S}}]}\).  In Alg. \ref{alg1}, we use a training dataset $X^L$ with the same size as the sampling set $\mathbb{X}$ used in the scenario optimization so that the learned weighting function $w_j^L$ aligns with the $N$ sampled states. Studying other sizes of $X^L$ is left for future work.

\section{Examples}
\label{sec:ex}
In this section, we demonstrate the effectiveness of our approach through a series of examples. All computations are conducted on a Windows machine equipped with an Intel i7-13700H CPU and 32 GB of RAM.

In experiments, we parameterize $h_1(\bm{a},\bm{x})$ as a polynomial of degree $d$. The parameters are set as follows: $\beta= 10^{-20}$, $N^\prime=10^3$, $K=5$, $K^{'}=10$, $U_{a_l} = 10^3$, $C=-1$, $\epsilon_0=10^{-6}$, and $\bm{x}_0$ is set to the origin. To estimate $\textnormal{P}_{\bm{x}}[\bm{x}\in\widetilde{\mathbb{S}}]$, we employ the Monte Carlo method, sampling $10^6$ states from $\mathcal{X}$ uniformly and computing $\frac{1}{10^6}\sum_{i=1}^{10^6} 1_{\widetilde{\mathbb{S}}}(\bm{x}_i)$. 
\subsection{Case Studies}
\begin{example}[PAC SISs]
\label{ex:1} Consider the VanderPol oscillator in \citep{henrion2013convex},
    \begin{equation*}
        \begin{cases}
        x(t+1)=x(t) + 0.01(-2y(t)),\\
        y(t+1)=y(t) + 0.01(0.8x(t)-10(y(t)-0.21)y(t)),
        \end{cases}
    \end{equation*}
    where the safe set is $\mathcal{X}=\{\,(x,y)^{\top}\mid x^2+y^2-1.1 < 0\,\}$. 

    \begin{figure}[t]
    \centering
    \subfigure[$\alpha=0.3$]{
    \includegraphics[width=0.3\linewidth]{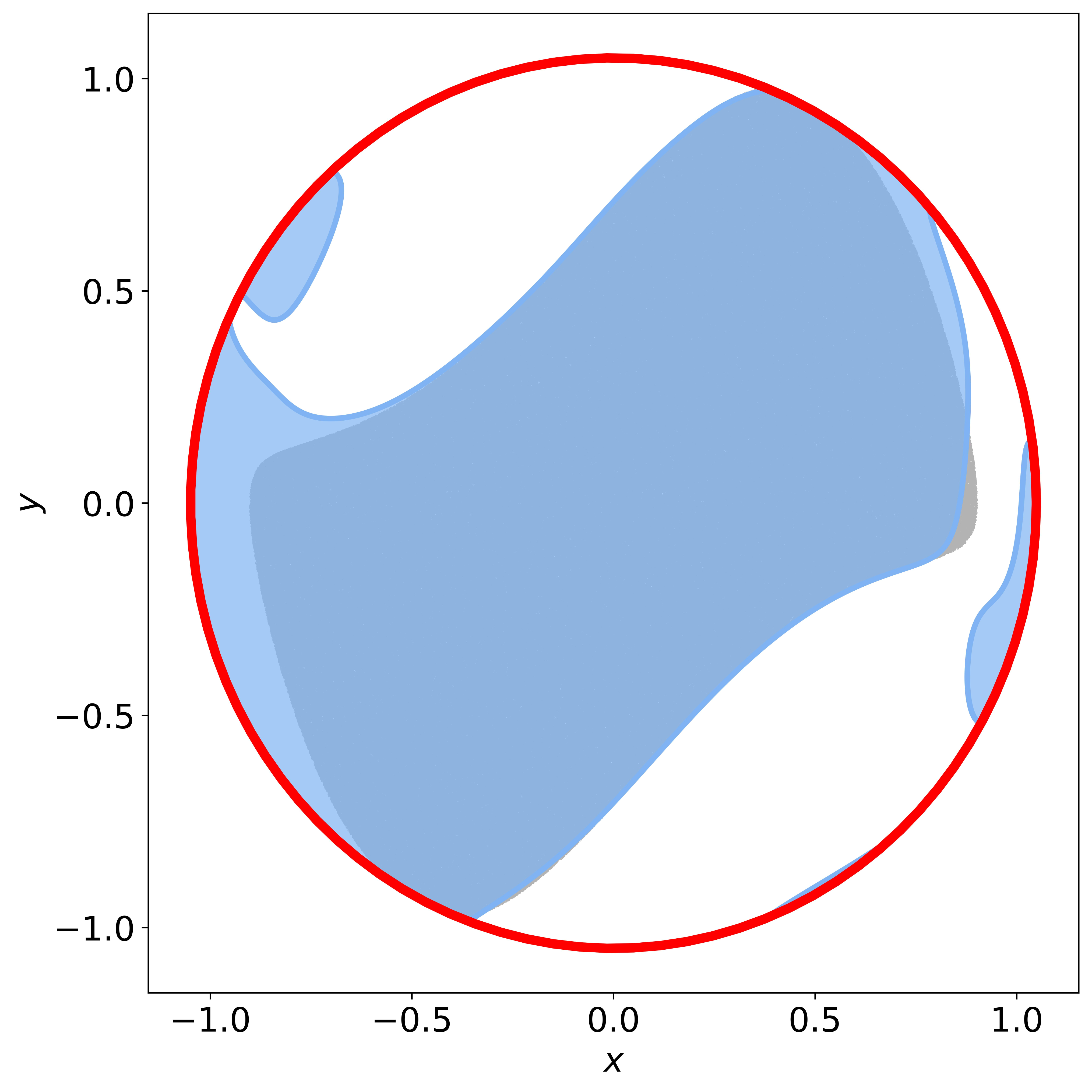}
    \label{fig:vanderpol1}
    }
    \subfigure[{$\alpha=0.1$}]{
    \includegraphics[width=0.3\linewidth]{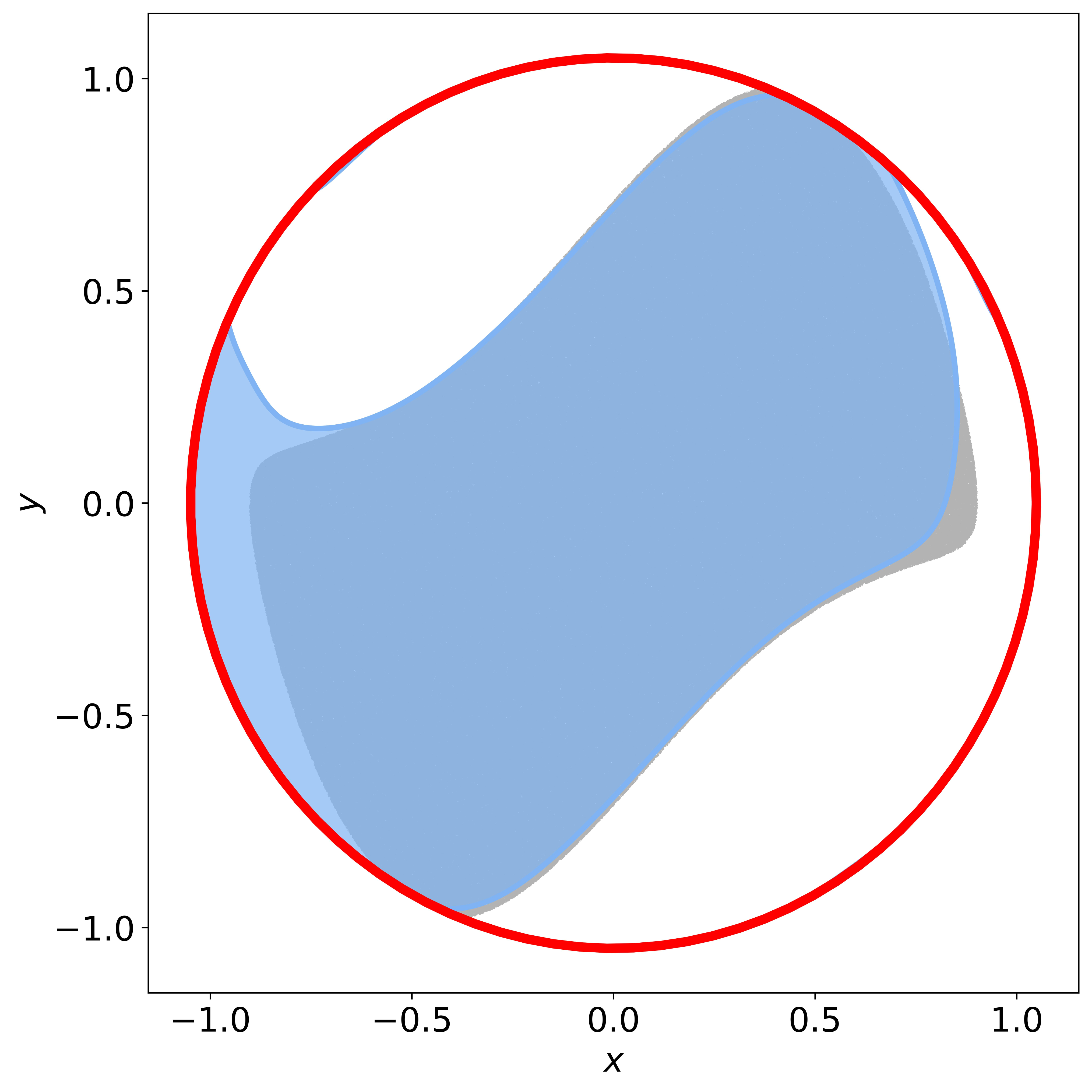}
    \label{fig:vanderpol2}
    }
    \subfigure[{$\alpha=0.05$}]{
    \includegraphics[width=0.3\linewidth]{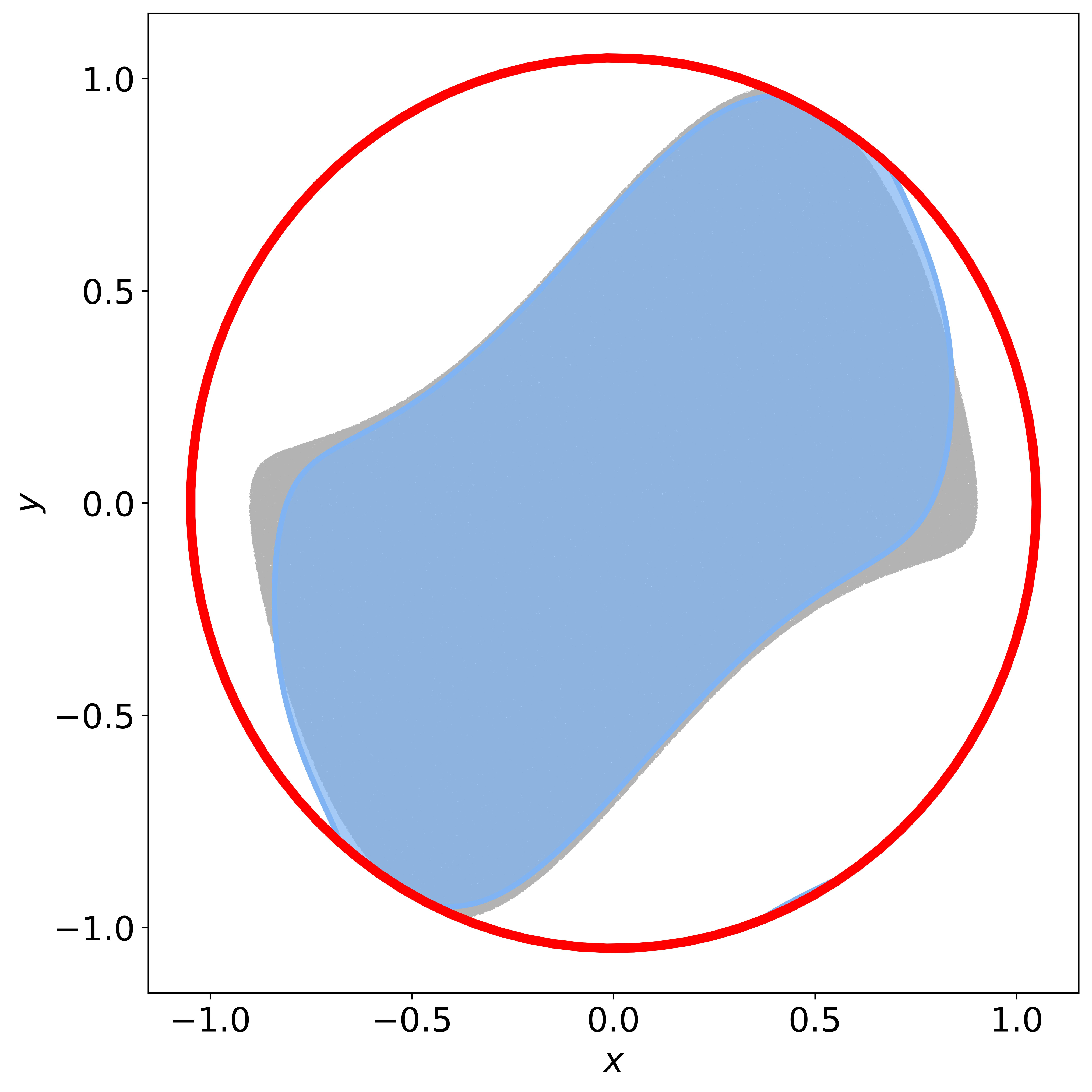}
    \label{fig:vanderpol3}
    }
    \subfigure[{$\alpha=0.01$}]{
    \includegraphics[width=0.3\linewidth]{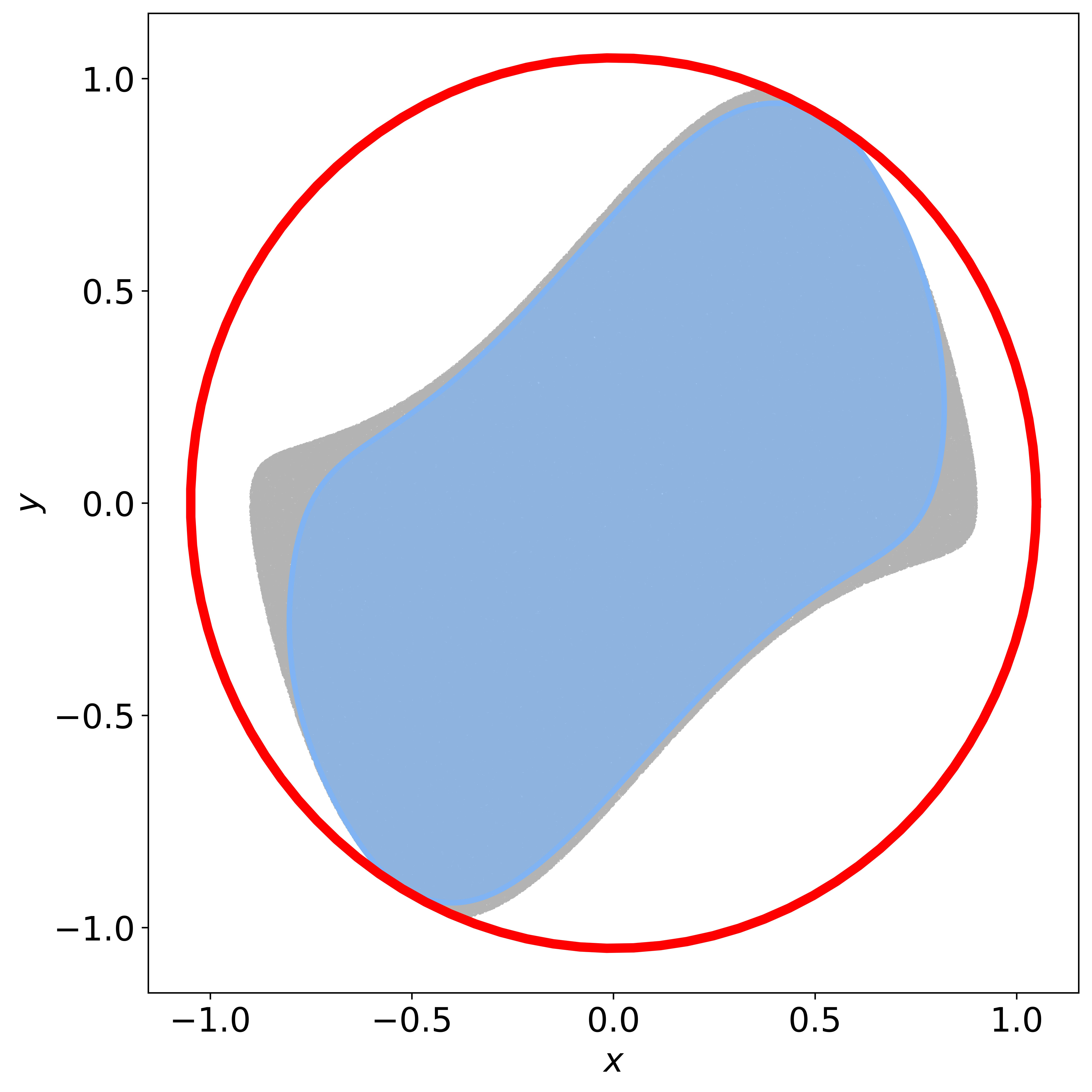}
    \label{fig:vanderpol4}
    }
    \caption{\textcolor{red}{Red} curves: the boundary of $\mathcal{X}$. \textcolor[RGB]{127,178,244}{Blue} region: the calculated PAC SIS, and \textcolor[RGB]{127,127,127}{gray} region: the SIS obtained using the Monte Carlo method.}
    \label{fig:vanderpol_alpha}
\end{figure}
\begin{table}[t]
\centering
\caption{Parameters and Results}
\setlength{\tabcolsep}{0.8mm}{
    \begin{tabular}{*{10}{c}}
\toprule
EX  & $\alpha$ & $N$ & $M$ & $d$ & $\epsilon$ & Time & $\textnormal{P}_{\bm{x}}[\bm{x}\in\widetilde{\mathbb{S}}]$ & $1-\frac{\alpha}{\textnormal{P}_{\bm{x}}[\bm{x}\in\widetilde{\mathbb{S}}]}$ \\ \midrule
\multirow{5}{*}{\ref{ex:1}} & $0.3$ & {$921$} & $-$ & $12$ & $-$ & {$0.52$} & {$0.6646$} & {$0.5486$}  \\\cmidrule(lr){2-9} 
 & $0.1$ & {$2762$} & $-$ & $12$ & $-$ & {$0.76$} & {$0.6231$} & {$0.8395$}  \\\cmidrule(lr){2-9} 
 & $0.05$ & {$5523$} & $-$ & $12$ & $-$ & {$1.16$} & {$0.5629$} & {$0.9112$}  \\\cmidrule(lr){2-9} 
 & $0.01$ & {$27611$} & $-$ & $12$ & $-$ & {$2.91$} & {$0.5352$} & {$0.9813$}  \\\midrule
 \multirow{8}{*}{\ref{ex:3}} 
  & $0.01$ & {$13611$} & $9$ & $5$ & $0.7$ & {$3.49$} & {$0.5984$} & {$0.9833$} \\\cmidrule(lr){2-9}
  & $0.01$ & {$13611$} & $9$ & $5$ & $0.5$ & {$3.42$} & {$0.6927$} & {$0.9856$} \\\cmidrule(lr){2-9}
 & $0.01$ & {$13611$} & $9$ & $5$ & $0.3$ & {$3.36$} & {$0.6529$} & {$0.9847$} \\\cmidrule(lr){2-9}
 & $0.01$ & {$13611$} & $9$ & $5$ & $0.1$ & {$3.54$} & {$0.5746$} & {$0.9826$} \\\cmidrule(lr){2-9}
& $0.01$ & {$13611$} & $9$ & $5$  & $0.0$ & {$3.41$} & {$0.5526$} & {$0.9819$} \\\midrule
\ref{ex:4} & $0.01$ & {$27611$} & $8$ & $2$ & $0.1$ & {$11.9$} & {$0.7931$} & {$0.9874$} \\
\bottomrule
\end{tabular}}
\label{table}
\end{table}

In this example, we fix $\beta$ and other parameters while varying $\alpha$ to study its effect on PAC SISs (Fig. \ref{fig:vanderpol_alpha}, Table \ref{table}). The maximal SIS is estimated via Monte Carlo: $10^6$ states are sampled uniformly from the safe set, and those that remain for 500 steps form the maximal SIS (gray region).

The PAC SIS has the largest volume at $\alpha=0.3$, but also includes the most states that may leave the safe set, which aligns with the smallest $1-\frac{\alpha}{\textnormal{P}_{\bm{x}}[\bm{x}\in \widetilde{\mathbb{S}}]}$ value.  As $\alpha$ decreases, the PAC SIS shrinks, reducing risky states because more samples are used. When all states are considered, the PAC SIS becomes a true SIS. For $\alpha=0.01$, the PAC SIS is a subset of the maximal SIS.
\end{example}

\begin{example}[PAC CSISs]
\label{ex:3} Consider the discrete model adapted from \citep{tan2008stability},
    \begin{equation*}
        \begin{cases}
        x(t+1)=x(t) + 0.01(-0.41x(t)-1.05y(t)-2.3x^2(t)\\
        \quad\quad\quad\quad\quad\quad\quad\quad\quad-0.56x(t)y(t)-x^3(t)+x(t)u_1(t)),\\
        y(t+1)=y(t) + 0.01(1.98x(t)+x(t)y(t)+y(t)u_2(t)),
        \end{cases}
    \end{equation*}
where $\mathcal{X} = \{(x, y)^\top \mid -3\leq x,y \leq 3\}$, and $\mathbb{U} = \{(u_1, u_2)^\top \mid -1 \leq u_1, u_2 \leq 1\}$.

In this example, we evaluate Alg. \ref{alg1} under five $\epsilon$ settings (Table \ref{table}). The first iteration leads to a conservative PAC CSIS with $\textnormal{P}_{\bm{x}}[\bm{x}\in\widetilde{\mathbb{S}}]=0.0786$. Subsequent iterations expand the PAC CSIS under each $\epsilon$. When $\epsilon=0.5$, the PAC CSIS is largest, with the highest $\textnormal{P}_{\bm{x}}[\bm{x}\in \widetilde{\mathbb{S}}]$, and $1-\frac{\alpha}{\textnormal{P}_{\bm{x}}[\bm{x}\in \widetilde{\mathbb{S}}]}$ also reaches its maximum. These results outperform the purely greedy case ($\epsilon=0$), demonstrating the benefit of the $\epsilon$-greedy strategy in computing a larger PAC CSIS.
\end{example}

\begin{example}[PAC CSISs]
\label{ex:4} Consider the Lorenz model of dimension $12$ adapted from \citep{lorenz1996predictability},
\begin{equation*}
    \begin{cases}
    x_i(t+1) = x_i(t) + 10^{-2}\big((x_{i+1}(t) - x_{i-2}(t))x_{i-1}(t)\\
    \hspace{3cm}\quad\quad -x_i(t)+2\big),\ i = 1,2, ..., 9,\\
    x_i(t+1) = x_i(t) + 10^{-2}\big((x_{i+1}(t) - x_{i-2}(t))x_{i-1}(t)\\
    \quad\quad\quad\quad\quad\quad -x_i(t)+2+u_{i-9}(t)\big),\ i = 10,11,12,
    \end{cases}
\end{equation*}
where $x_{-1} = x_{11}$, $x_{0} = {12}$, and $x_{13} = x_{1}$. The safe set is $\mathcal{X} = \{\bm{x} \mid -15\leq x_0, \ldots, x_{12} \leq 15\}$ and the control set is $\mathbb{U} = \{\bm{u} \mid -10 \leq u_1, u_2, u_3 \leq 10\}$. For high-dimensional systems such as this example, our approach effectively computes a PAC CSIS, as shown in Table~\ref{table}, showcasing the scalability of our method to complex systems. Consistent with earlier discussion, the complexity of our method is independent of the dimension of the state space, enabling it to scale effectively to high-dimensional scenarios.
\end{example}

\subsection{Comparisons}

To our knowledge, no existing methods directly compute PAC SISs and CSISs for black-box discrete-time nonlinear systems. Thus, we compare our approach with two model-based baselines—SOS programming and a Bellman-equation method—using Examples \ref{ex:1} and \ref{ex:c1}--\ref{ex:c4}. The SOS method handles uncontrolled polynomial systems by turning \eqref{barrier0} into a semidefinite program. The Bellman method, adapted from \citep{xue2021robust}, computes the maximal SIS through value iteration.

\begin{table}[h!]
\centering
\caption{Comparison with Existing Methods}
\setlength{\tabcolsep}{0.8mm}{
\begin{tabular}{*{9}{c}}
\toprule
\multirow{2}{*}{EX} & Monte Carlo & \multicolumn{2}{c}{SOS} & \multicolumn{2}{c}{Bellman} & \multicolumn{3}{c}{PCSIS (ours)}\\
\cmidrule(lr){2-2}\cmidrule(lr){3-4}\cmidrule(lr){5-6}\cmidrule(lr){7-9}
& $\textnormal{P}_{\bm{x}}$ & Time & $\textnormal{P}_{\bm{x}}$ & Time & $\textnormal{P}_{\bm{x}}$ & Time & $\textnormal{P}_{\bm{x}}$ & $1-\frac{\alpha}{\textnormal{P}_{\bm{x}}}$ \\\midrule
\ref{ex:1} & $0.5906$ & $4.25$ & $0.5761$ & $38.2$ & $0.4419$ & {$2.91$} & {$0.5352$} & {$0.9813$} \\
\ref{ex:c1} & $0.9824$ & $2.16$ & $0.9325$ & $1.21$ & $0.9683$ & {$0.71$} & {$0.9793$} & {$0.9949$} \\
\ref{ex:c2} & $0.4374$ & - & - & $2.86$ & $0.4090$ & {$2.14$} & {$0.4262$} & {$0.9765$} \\
\ref{ex:c4} & $0.4079$ & - & - & - & - & {$18.7$} & {$0.3783$} & {$0.9736$} \\
\bottomrule
\end{tabular}}
\label{table_comp}
\end{table}

Table \ref{table_comp} reports the results. Computation time is in seconds, and $\textnormal{P}_{\bm{x}}=\textnormal{P}_{\bm{x}}[\bm{x}\in\widetilde{\mathbb{S}}]$ measures the SIS volume. A dash (“–”) indicates that no solution was obtained. We also list the Monte-Carlo estimate of the maximal SIS volume for reference. Our method and the SOS approach produce SISs of similar size, but our method is faster when the SOS solver succeeds. SOS can also fail due to numerical issues (e.g., Example \ref{ex:c2}). The Bellman-equation method yields comparable SISs but lacks formal guarantees, is more computationally expensive, and scales poorly. In contrast, our method provides PAC guarantees and remains efficient even for higher-dimensional systems such as Example \ref{ex:c4}, where the other methods become infeasible.

\section{Conclusion}
\label{sec:con}
This paper addresses the problem of identifying CSISs for black-box discrete-time systems. We introduce the concept of a PAC CSIS, which quantifies, with a given confidence level, the fraction of states for which a control input exists that keeps the system within the set at the next time step. To compute PAC CSISs from data, we propose a linear programming approach and demonstrate its effectiveness through several numerical examples.

For future work, we plan to extend the nested PAC characterization framework in \cite{xue2020pac} to compute CSISs and controlled reach–avoid sets for black-box discrete-time stochastic systems \cite{abate2008probabilistic, xue2024sufficient}. We also aim to integrate our method with safe reinforcement learning to enable more reliable learning-based control strategies.

\begin{ack}
We extend our sincere gratitude to Dr. Dominik Wagner and Professor Luke Ong of Nanyang Technological University, Singapore, for their insightful discussions that significantly enhanced this work.

This work was partially supported by the National Research Foundation, Singapore, under its RSS Scheme (NRF-RSS2022-009), the CAS Pioneer Hundred Talents Program, the National Research Foundation, Singapore, and DSO National Laboratories under the AI Singapore Programme (AISG Award No: AISG2-RP-2020-017). 
\end{ack}

\bibliography{ifacconf}            

\renewcommand\thesubsection{\Alph{subsection}}

\section*{Appendix}

In the comparative experiment, each run was limited to 1 hour. The PCSIS parameters are given in Tab. \ref{table_para}. For each example, the SOS method used the same polynomial degree and $\gamma$ as PCSIS, and the Bellman-equation method sampled a similar number of states. In the Bellman-equation method, $\alpha = 0.1$; $\epsilon = 10^{-50}$ for Example \ref{ex:1}, and $\epsilon = 10^{-20}$ for Examples \ref{ex:c1}–\ref{ex:c4}. The meanings of $\alpha$ and $\epsilon$ are detailed in \citep{xue2021robust}.

\begin{table}[htbp]
\centering
\caption{Parameters in Experiments}
\setlength{\tabcolsep}{3.0mm}{
    \begin{tabular}{*{5}{c}}
\toprule
Example& $\alpha$ & $N$ & $d$ & $\gamma$ \\ \midrule
\ref{ex:c1} & $0.005$ & {$24821$} & $4$ & $0.9$  \\\midrule
\ref{ex:c2} & $0.01$ & {$27611$} & $12$ & $0.9999$  \\\midrule
\ref{ex:c4} & $0.01$ & {$51411$} & $4$ & $0.9999$  \\
\bottomrule
\end{tabular}}
\label{table_para}
\end{table}

Examples \ref{ex:c1}--\ref{ex:c4} are presented as follows.
\begin{example}
\label{ex:c1} Consider the predator-prey model,
    \begin{equation*}
        \begin{cases}
        x(t+1)=0.5x(t) - x(t)y(t),\\
        y(t+1)=-0.5y(t) + x(t)y(t),
        \end{cases}
    \end{equation*}
    where the safe set is $\mathcal{X}=\{\,(x,y)^{\top}\mid x^2+y^2-1 < 0\,\}$. 
\end{example}

\begin{example}
\label{ex:c2} Consider the following nonlinear system
    \begin{equation*}
        \begin{cases}
        x(t+1)=2x^2(t) + y(t),\\
        y(t+1)=-2(2x^2(t) + y(t))^2-0.8x(t),
        \end{cases}
    \end{equation*}
    where the safe set is $\mathcal{X}=\{\,(x,y)^{\top}\mid -1 \leq x,y \leq 1\,\}$. 
\end{example}

\begin{example}
\label{ex:c4} Consider the following nonlinear system from \citep{edwards2024fossil}:
    \begin{equation*}
        \begin{cases}
        x_1(t+1)=x_1(t) + 0.01(x_2(t)x_4(t)-x_1^3(t)),\\
        x_2(t+1)=x_2(t) + 0.01(-3x_1(t)x_4(t)-x_2^3(t)),\\
        x_3(t+1)=x_3(t) + 0.01(-x_3(t)-3x_1(t)x_4^3(t)),\\
        x_4(t+1)=x_4(t) + 0.01(-x_4(t)+x_1(t)x_3(t)),\\
        x_5(t+1)=x_5(t) + 0.01(-x_5(t)+x_6^3(t)),\\
        x_6(t+1)=x_6(t) + 0.01(-x_5(t)-x_6(t)+x_3^4(t)),\\
        \end{cases}
    \end{equation*}
    where $\mathcal{X}=\{\,(x_1,\ldots,x_6)^{\top}\mid -0.5 \leq x_1,\ldots,x_6 \leq 2\,\}$. 
\end{example}

\end{document}